	\numberwithin{equation}{section}
 	\newtheorem{lemma}{Lemma}[section]
\newcommand{\R}{\mathbb{R}}
\newcommand{\C}{\mathbb{C}}
\newcommand{\Z}{\mathbb{Z}}
\newcommand{\ZU}{\mathcal{Z}_{\mathrm{u}}}
\newcommand{\ZUsymH}{\mathcal{Z}_{\mathrm{u},H} ^{\mathrm{sym.}}}
\newcommand{\ZUsymE}{\mathcal{Z}_{\mathrm{u},E} ^{\mathrm{sym.}}}
\newcommand{\ZguH}{\mathcal{Z}_{\mathrm{u},H} ^{\mathrm{gen.}}}
\newcommand{\ZguE}{\mathcal{Z}_{\mathrm{u},E} ^{\mathrm{gen.}}}
\newcommand{\mfsymH}{\mathcal{F}_{\mathrm{u},H}^{\mathrm{sym.}}}
\newcommand{\mfsymE}{\mathcal{F}_{\mathrm{u},E}^{\mathrm{sym.}}}
\newcommand{\mfgenuH}{\mathcal{F}_{\mathrm{u},H}^{\mathrm{gen.}}}
\newcommand{\mfgenuE}{\mathcal{F}_{\mathrm{u},E}^{\mathrm{gen.}}}
\newcommand{\ii}{\mathrm{i}}
\newcommand{\dd}{\mathrm{d}}
\begin{document}

\title[]{Exact equivalences and phase discrepancies between random matrix ensembles}
\author[]{Leonardo Santilli}
\address[LS]{Grupo de F\'{i}sica Matem\'{a}tica, Departamento de Matem\'{a}tica, Faculdade de Ci\^{e}ncias, Universidade de Lisboa, Campo Grande, Edif\'{i}cio C6, 1749-016 Lisboa, Portugal.}
\email{lsantilli@fc.ul.pt}

\author[]{Miguel Tierz}
\address[MT]{Departamento de Matem\'{a}tica, ISCTE - Instituto Universit\'{a}rio de Lisboa, Avenida das For\c{c}as Armadas, 1649-026 Lisboa, Portugal.}
\email{mtpaz@iscte-iul.pt}
\address[MT]{Grupo de F\'{i}sica Matem\'{a}tica, Departamento de Matem\'{a}tica, Faculdade de Ci\^{e}ncias, Universidade de Lisboa, Campo Grande, Edif\'{i}cio C6, 1749-016 Lisboa, Portugal.}
\email{tierz@fc.ul.pt}

\begin{abstract}
We study two types of random matrix ensembles that emerge when considering the same probability measure on partitions. One is the Meixner ensemble with a hard wall and the other are two families of unitary matrix models, with
weight functions that can be interpreted as characteristic polynomial insertions. We show that the models, while having the same exact evaluation for fixed values of the parameter, may present a different phase structure. We find phase transitions of the second and third order, depending on the model. Other relationships, via direct mapping, between the unitary matrix models and continuous random matrix ensembles on the real line, of Cauchy--Romanovski type, are presented and studied both exactly and asymptotically. The case of orthogonal and symplectic groups is studied as well and related to Wronskians of Chebyshev polynomials, that we evaluate at large $N$.
\end{abstract}

\maketitle
\tableofcontents

\section{Introduction}

Random matrix theory \cite{Mehtabook} has developed enormously, especially in the last two decades, attracting attention from researchers in a multitude of different fields \cite{Forrester,Blowerbook,AGZbook,AkeBaikDiFbook,PasturShbook,Taobook,BDSuidbook}. Indeed, one of its most exciting aspects is an inherent interdisciplinarity in the study of random matrices. Among the myriad of connections with other mathematical and physical areas, we have the important relationship with the study of random partitions, which in turn is deeply linked with problems in statistical mechanics, as we shall see.

In this work, we will see that random matrix ensembles which are related in different ways, either via direct mapping or emerging as two random matrix descriptions of the same object in the study of random partitions, have the same analytical evaluations for fixed values of the parameters yet, in some cases, the consideration of the large $N$ limit leads to a different phase structure. 

Originally, we started by noting that the analysis of probabilities given by a finite ensemble with weight of the Meixner type, for example as in \cite{CP:2013}, could, in principle, equally be studied with a random unitary matrix model, with a distribution of eigenvalues supported on the unit circle. Indeed, we could check, as we shall see, that the corresponding unitary matrix model gives the same results. Furthermore, the equivalence holds for the calculation of correlations near the edge of the eigenvalue density, in a double-scaling limit. 
However, when studying the more general ensemble with Meixner weight and a hard wall \cite{CP:2015} and the corresponding generalized unitary matrix model, while there is again an equivalence for fixed values of the parameter, a different asymptotic behaviour appears, with phase transitions of second order instead of third order. This discrepancy is a consequence of the complexification of the logarithm of the weight function of the more general unitary random matrix ensembles, studied in Section \ref{sec:Baikgen} below.

We further elaborate on this whole notion by studying other equivalences, with continuous random matrix ensembles on the real line, via direct mapping this time. We first introduce notions of random partitions and how either discrete random matrix ensembles, such as the Meixner ensemble with a hard wall, or continuous unitary random matrix ensembles appear. 

\medskip

	Let $\lambda = \left( \lambda_1, \lambda_2, \dots \right)$, $\lambda_1 \ge \lambda_2 \ge \dots$, be a partition, and $\mathfrak{s}_{\lambda} (\underline{t})$ the Schur polynomial associated to it \cite{Macdonaldbook}, evaluated at $\underline{t} =(t_1, t_2, \dots)$. The Schur measure is the probability measure on the space of partitions defined as \cite{Owedge}
	\begin{equation}
	\label{eq:Schurmeasure}
		\mathrm{Prob}_{(\underline{t} , \underline{t} ^{\prime} )} \left( \lambda \right) = \frac{1}{\mathcal{Z}_{\mathrm{Schur}} (\underline{t} , \underline{t} ^{\prime} ) }  \mathfrak{s}_{\lambda} ( \underline{t} ) \mathfrak{s}_{\lambda} ( \underline{t} ^{\prime} ) ,
	\end{equation}
	with normalization constant the inverse of the partition function
	\begin{equation*}
		\mathcal{Z}_{\mathrm{Schur}} (\underline{t} , \underline{t} ^{\prime} ) = \sum_{\lambda}  \mathfrak{s}_{\lambda} ( \underline{t} ) \mathfrak{s}_{\lambda} ( \underline{t} ^{\prime} ) = \prod_{j,k} \left( 1 - t_i t^{\prime} _j \right)^{-1} ,
	\end{equation*}
	where last equality is the Cauchy identity. Here we are denoting $\mathrm{Prob}_{(\underline{t} , \underline{t} ^{\prime} )}$ the probability taken with respect to a given choice of parameters $(\underline{t} , \underline{t} ^{\prime} )$. Define the set
	\begin{equation*}
		\mathfrak{S} ( \lambda) := \left\{ \lambda_j - j + \frac{1}{2} , \ j=1,2, \dots \right\} \subset \mathbb{Z} + \frac{1}{2} ,
	\end{equation*}
	which encodes the shape of the partition $\lambda$. For a given subset $\mathfrak{X}  \subset \mathbb{Z} + \frac{1}{2}$, the probability
	\begin{equation*}
		\mathrm{Prob}_{(\underline{t} , \underline{t} ^{\prime} )} \left( \mathfrak{X} \right) = \frac{1}{\mathcal{Z}_{\mathrm{Schur}} (\underline{t} , \underline{t} ^{\prime} ) }  \sum_{\lambda  \vert  \mathfrak{S} (\lambda) \supset \mathfrak{X} } \mathfrak{s}_{\lambda} ( \underline{t} ) \mathfrak{s}_{\lambda} ( \underline{t} ^{\prime} ) 
	\end{equation*}
	that, picking a random partition $\lambda$ according to the Schur measure we get $\mathfrak{X} \subset \mathfrak{S} (\lambda) $, has a determinantal representation. Such representation is the determinant of a known correlation kernel admitting an integral representation \cite{Owedge} (see also \cite[Sec. 4]{OLect}).\par
	In the present work, we are interested in a special case of Schur measure, which is a particular sub-case of what is known in the literature as $z$-measure \cite{BorodinOl,OSL2}. For any given $n \in \mathbb{N}$ the $z$-measure is a probability measure on the partitions $\left\{ \lambda \right\} $ of $n$ (thus with $\vert \lambda \vert =n$) depending on two parameters $z$ and $z^{\prime}$. A probability measure over all partitions $\left\{ \lambda \right\}$ (with $\vert \lambda \vert$ arbitrary) is obtained passing to the grand canonical ensemble, summing over $n$. In defining the grand canonical ensemble, every $n \in \mathbb{N}$ is weighted with a negative binomial distribution \cite{BorodinOl}.\par
	We choose the parameters in \eqref{eq:Schurmeasure} to be
	\begin{equation}
	\label{eq:zmeasureparam}
		\underline{t} =  ( \underbrace{ t, \cdots , t }_{N_{1} \text{ times}} , 0 ,\dots ) , \quad \underline{t} ^{\prime} = ( \underbrace{ t, \cdots , t }_{N_{2} \text{ times}} , 0 ,\dots ) , \qquad 0<t<1 ,
	\end{equation}
	so that the Schur measure \eqref{eq:Schurmeasure} becomes a particular instance of $z$-measure with $z=N_1, z^{\prime} =N_2 \in \mathbb{N}$, and the parameter of the binomial distribution being $t^2$. This choice of $z,z^{\prime}$ is the degenerate case, \emph{i.e.}, non-negative measure, and positive definite when restricted to partitions of length at most $N_1$. The probabilities
	\begin{equation*}
		\mathrm{Prob}_{t} \left( \mathfrak{X} \right) = \left( 1 - t^2 \right)^{N_1 N_2} \sum_{\lambda  \vert  \mathfrak{S} (\lambda) \supset \mathfrak{X} } \mathfrak{s}_{\lambda} ( \underline{t} ) \mathfrak{s}_{\lambda} ( \underline{t} ^{\prime} )  ,
	\end{equation*}
	where by $\mathrm{Prob}_{t}$ we understand the probability taken with specialization of parameters \eqref{eq:zmeasureparam}, admit a determinantal representation in terms of the so-called \emph{hypergeometric kernel} \cite{BorodinOl}. The properties of the hypergeometric function were used in \cite{BorodinOl} to show that, when $z=N_1, z^{\prime}=N_2$ are integers, the hypergeometric kernel becomes proportional to the Meixner kernel. For a collection of results about the $z$-measure, see the survey \cite{BOreview}, and in particular Proposition 6.1 therein for the connection with the Meixner ensemble.\par
	In the present work we are interested in the quantities
	\begin{equation}
	\label{eq:ZHschur}
		\mathcal{Z}_{H} (t) = \sum_{\lambda }   \mathfrak{s}_{\lambda} ( \underline{t} ) \mathfrak{s}_{\lambda} ( \underline{t} ^{\prime} ) = \sum_{\lambda} \left( \dim \lambda \right)^2 t^{2 \left\vert \lambda \right\vert } =\left( 1 - t^2 \right)^{-N_{1} N_{2}} 
	\end{equation}
	and (the sum is over all $\lambda$ with $\lambda_1 \le K$)
	\begin{equation}
	\label{eq:ZEschur}
		\mathcal{Z}_{E} (t) = \sum_{\lambda \vert \lambda_1 \le K }   \mathfrak{s}_{\lambda} ( \underline{t} ) \mathfrak{s}_{\lambda} ( \underline{t} ^{\prime} ) = \sum_{\lambda \vert \lambda_1 \le K }  \left( \dim \lambda \right)^2 t^{2 \left\vert \lambda \right\vert } ,
	\end{equation}
	with choice of parameter understood to be as in \eqref{eq:zmeasureparam}, and $\vert \lambda \vert = \sum_j \lambda_j$ is the size of the partition $\lambda$, corresponding to the total number of boxes in its diagram, and $\dim \lambda$ is the dimension of the irreducible representation of the symmetric group $S_{\vert \lambda \vert}$ labelled by $\lambda$. The meaning of the subscripts $H$ and $E$ will be clear in a moment. Notice that, due to the property $\mathfrak{s}_{\lambda} ( \underline{t}) = 0$ if $\mathrm{length} ( \lambda ) > \mathrm{length} ( \underline{t}) $, the sums are effectively truncated to partitions of length at most $N_1$. Furthermore, the ratio
	\begin{equation*}
		\frac{ \mathcal{Z}_{E} (t) }{ \mathcal{Z}_{H} (t) } = \mathrm{Prob}_t \left( \lambda_1 \le K \right)
	\end{equation*}
	is the probability that, picking a random partition $\lambda$ with probability distribution as described above, the length of its rows is at most $K$.\par
	The $z$-measure induces a determinantal point process on $\mathbb{Z} + \frac{1}{2}$, thus the correlation functions have determinantal form
	\begin{equation*}
		\mathrm{Prob}_{t} \left( \mathfrak{X} \subset \mathfrak{S} ( \lambda) \right) = \det \left( \mathcal{K} \right)_{\mathfrak{X}} , 
	\end{equation*}
	for $\mathfrak{X} \subset \mathbb{Z} + \frac{1}{2}$, where $\mathcal{K}$ is the operator whose kernel, the function  $\mathcal{K} (x,y)$ on $\left( \mathbb{Z} + \frac{1}{2} \right) \times \left( \mathbb{Z} + \frac{1}{2} \right)$, is the hypergeometric kernel. Therefore we have a Fredholm determinant representation of $\mathcal{Z}_{E} (t)$:
	\begin{align*}
		\mathcal{Z}_{E} (t) & = (1-t^2)^{-N_1 N_2} \mathrm{Prob}_t \left( \lambda_1 \le K \right) \\
		& = (1-t^2)^{-N_1 N_2} \mathrm{Prob}_t \left( \left\{ x \in \mathbb{Z} + \frac{1}{2} \ \vert \ x \le K - \frac{1}{2} \right\} \cap \mathfrak{S} ( \lambda ) = \emptyset \right) \\
		& = ( 1-t^2 )^{-N_1 N_2} \det \left(  \mathbf{1} - \mathcal{K} \right)_{ \mathbb{Z}_{\ge K} + \frac{1}{2}  } ,
	\end{align*}
	where $\mathbf{1}$ in the last line is the identity operator. Using the hook-length formula 
	\begin{equation*}
		\dim \lambda = \prod_{j < k} \frac{ \lambda_j - \lambda_k - j + k }{ k-j} 
	\end{equation*}
	 for the dimension in eq. \eqref{eq:ZEschur} and changing variables $h_j  = \lambda_j - j + N$, we arrive to the expression 
	\begin{align}
	\label{eq:ZEMeix}
		\mathcal{Z}_{E} (t) & = \frac{ t^{-N_1 (N_1-1)} \left( \Gamma (N_2 -N_1 +1) \right)^{N_1} }{ N_1 ! G(N_1+1) G(N_2 +1) } \\
		& \times \sum_{ h_1 = 0 } ^{N_1 + K -1} \cdots  \sum_{ h_{N_{1}} = 0 } ^{N_1 + K -1} \prod_{1 \le j < k \le N_1} \left( h_j - h_k \right)^2 \prod_{j=1} ^{N_1} \left( \begin{matrix} N_2 - N_1 + h_j \\ h_j  \end{matrix} \right) t^{2 h_j} , \notag
	\end{align}
	where we also used the symmetry in the $h_j$ variables to remove the restriction to the Weyl chamber $\lambda_1 \ge \lambda_2 \ge \dots \ge \lambda_{N_1}$. $G( \cdot )$ is the Barnes $G$-function \cite{barnes1899theory}, which, for integer values of the argument is $G(n) = \prod_{j=0} ^{n-2} j!$. 
	The expression \eqref{eq:ZEMeix} is a Meixner ensemble with summation restricted to $0 \le h_j \le N_1 + K-1$ and coincides with the partition function of the dimer model studied in \cite{CP:2013,CP:2015}\footnote{The dimer model is associated with the random tiling of an Aztec diamond with a square \cite{CP:2013} or a rectangle \cite{CP:2015} cut off. See \emph{e.g.} the recent work \cite{stephan2020} for an overview on dimers, random tilings and random matrices.}, and has also appeared in \cite{Johansson:2010}. In the Coulomb gas picture, the restriction in the summation range corresponds to a hard wall for the charges (\emph{i.e.} an infinite barrier) placed at $N_1+K-1$. We assumed $N_1 \le N_2$, but a completely analogous expression can be easily obtained in the converse case.\par
	Let us introduce the generating functions of, respectively, the complete homogeneous polynomials $\left\{ \mathfrak{h}_k \right\} $ and the elementary symmetric polynomials $\left\{ \mathfrak{e}_k \right\}$, specialized at $\underline{t}$:
	\begin{align*}
		H (z; \underline{t}) &= \sum_{k=0} ^{\infty} \mathfrak{h}_k ( \underline{t} ) z^k =  \prod_{k} \left( 1-t_k z \right)^{-1} , \\
		E (z; \underline{t}) &= \sum_{k=0} ^{\infty} \mathfrak{e}_k ( \underline{t} ) z^k =  \prod_{k} \left( 1 + t_k z \right) .		
	\end{align*}
	The partition functions \eqref{eq:ZHschur}-\eqref{eq:ZEschur} admit a determinantal representation in terms of determinants of $K \times K$ Toeplitz matrices, with symbol, respectively
	\begin{align*}
		\sigma_{H} (z,t) & = H(z; \underline{t}) H(z^{-1}; \underline{t}^{\prime}) = \left( 1 - t z \right)^{-N_1}   \left( 1 - t z^{-1} \right)^{-N_2}  , \\
		\sigma_{E} (z,t) & = E(z; \underline{t}) E(z^{-1}; \underline{t}^{\prime}) = \left( 1 + t z \right)^{N_1}   \left( 1 + t z^{-1} \right)^{N_2}  .
	\end{align*}
	 See for example \cite{Hikami:2003,GGT1} for the explicit derivation (see also \cite{BDSuidbook} for an extensive account of Toeplitz determinants and their properties). In turn, using Andr\'{e}ief's identity \cite{Andreief,MeetA} we have that these Toeplitz determinants admit a representation as unitary matrix integrals 
	\begin{align}
		\mathcal{Z}_{H} (t) &= \frac{1}{K!} \int_{[-\pi,\pi]^K} \frac{ \dd ^K \varphi }{(2 \pi)^K} \prod_{1 \le j < k \le K} \lvert e^{\ii \varphi_j} - e^{\ii \varphi_k} \rvert \prod_{j=1} ^K \left( 1-t e^{\ii \varphi_j} \right)^{- N_1} \left( 1-t e^{- \ii \varphi_j} \right)^{- N_2 } \label{eq:ZHmm},	\\
		\mathcal{Z}_{E} (t) &= \frac{1}{K!} \int_{[-\pi,\pi]^K} \frac{ \dd ^K \varphi }{(2 \pi)^K} \prod_{1 \le j < k \le K} \lvert e^{\ii \varphi_j} - e^{\ii \varphi_k} \rvert \prod_{j=1} ^K \left( 1+t e^{\ii \varphi_j} \right)^{N_1} \left( 1+t e^{- \ii \varphi_j} \right)^{N_2 } \label{eq:ZEmm} .
	\end{align}
	We will refer to \eqref{eq:ZHmm} and \eqref{eq:ZEmm} as the $H$-model and the $E$-model, respectively. Therefore we have two equivalent matrix model descriptions of the quantity $\mathcal{Z}_{E} (t)$, or, equivalently, of the probability $\mathrm{Prob}_t ( \lambda_1 \le K )$: either as a discrete matrix model on the bounded subset $\left\{ 0, 1 , \dots, N_1 + K -1 \right\} \subset \mathbb{Z}$, or as a continuous matrix model on the unit circle.\par
	We stress that the equivalence between these two representations does not rely on a direct map, but rather on a two-step procedure relating the two matrix model formulations of $\mathcal{Z}_E (t)$ to the same Toeplitz determinant. As a consequence, the quantity $( N_1 - N_2 )$, which has the meaning of a deformation parameter, plays different roles in the two pictures. This will be reflected in the mismatch of the phase structure when $N_1, N_2 \to \infty$ with $ N_1 - N_2  \ne 0$.

\section{Random matrix ensembles on the unit circle}
\label{sec:BaikMM}

	In the present Section, we consider the asymptotic behaviour of the unitary matrix models defined in eq. \eqref{eq:ZHmm} and \eqref{eq:ZEmm}, when the rank $K$ is large and $N_1, N_2$ scale with $K$. Before that, we comment on the already known aspects of the exact solvability of some of the models above. 
	
	\subsection{Prologue: Exact evaluation}
	\label{sec:unitexact}
		Following \cite{Johansson:2000}, the authors of \cite{CP:2015} gave an explicit evaluation of the discrete matrix model \eqref{eq:ZEMeix} at the limit value $t=1$. The exact formula of \cite[Proposition 3.1]{CP:2015} was obtained thanks to the fact that at the limit value $t=1$ the Meixner ensemble with a hard wall becomes a Hahn ensemble.
		
		On the unitary matrix model side, at $t=1$ (or more generally $\lvert t \rvert =1$) the symbol $\sigma_E$ develops a Fisher-Hartwig singularity, and it can be evaluated exactly thanks to a formula by B\"{o}ttcher and Silbermann \cite{BS:1985} (see also \cite{BSshort} for another proof):
		\begin{equation}
		\label{eq:exactZEt1}
			\mathcal{Z}_{E} (t) = \frac{ G (N_1 + 1 ) G(N_2 + 1 ) G (N_1 + N_2 + K +1) }{ G (N_1 + N_2 + 1) G (N_1 + K + 1) G( N_2 + K +1) } G(K+1) .
		\end{equation}
		This provides an exact check of the equivalence. We will provide a third independent derivation of this result later in Section \ref{sec:exactevZEreal}. For ease of the reader, we report in table \ref{tab:dictioCP} the correspondence between the notation of \cite{CP:2013,CP:2015}, that in \cite{Johansson:2000} and ours.
		\begin{table}
		\centering
		\begin{tabular}[bht]{l | c | c | c | c | c | c}
			In \cite{CP:2015} & $s$ & $r$ & $q$ & $\alpha $ & $R$ & $Q$ \\
			\hline
			In \cite{Johansson:2000} & $N$ & $N +t$ & $M -N = [\gamma N ] - N$ & $q $ & $\omega +1$ & $\gamma -1$ \\
			\hline
			Here & $N_1$ & $N_1 + K$ & $N_2 - N_1$ & $t^2$ & $1 + \gamma^{-1}$ & $2v /(1-v)$
		\end{tabular}
		\caption{Dictionary between the notation in \cite{CP:2015}, in \cite{Johansson:2000} and the present work.}
		\label{tab:dictioCP}
		\end{table}\par
		Note that, if we choose the symmetric model $N_1=N_2 \equiv N$ and modify the symbol by inserting a monomial factor $z^{s}$, with $s \in \mathbb{Z}$, the exact evaluation of the corresponding matrix model at $t=1$ is again given by the formula of \cite{BS:1985,BSshort}, thanks to the simple identity 
		\begin{equation}
		\label{unitaryshiftsymb}
			z^s \sigma_E (z, t=1) = z^{s} \left( 1+z \right)^{\beta} \left( 1+z^{-1} \right)^{\beta} = \left( 1+z \right)^{\beta +s}  \left( 1+z^{-1} \right)^{\beta -s} , 
		\end{equation}
		when $z=e^{\ii \varphi}$, $ - \pi \le \varphi \le \pi$. The effect of the monomial insertion is to shift the Fourier coefficients of the symbol $\sigma_E (z; 1)$ by $s$: the $k^{\mathrm{th}}$ Fourier coefficient of the symbol $z^s \sigma_E (z; 1)$ is the $(k+s)^{\mathrm{th}}$ coefficient of $\sigma_E (z; 1)$. Therefore, we can evaluate exactly the partition function of the unitary ensemble with weight \eqref{unitaryshiftsymb}, and it is again given by formula \eqref{eq:exactZEt1}, with $N_1=N+s, N_2 =N -s$. \par
		It is worth mentioning that there are a wealth of analytical results on Toeplitz banded matrices, whose determinant is given by the $E$-models above \cite{bottcher2005spectral}. More generally, there are many analytical results on determinants of Toeplitz matrices generated by a rational symbol, but we will explore complementary approaches here and indicate possible open problems at the end, in the Outlook Section.\par
		We just quote here that for \eqref{eq:ZEmm} with $N_1=N_2=1$ we have the determinant of a symmetric tridiagonal Toeplitz matrix, known to be equal to a Chebyshev polynomial of the second type: 
		\begin{equation*}
		    \mathcal{Z}_{E} (t) \vert_{N_1 = N_2 = 1} = t^{K} U_{K}\left( \frac{1+t^{2}}{2t}\right)= \sum_{n=0} ^{K} t^{2n} .
    	\end{equation*}
    	
	\subsection{Unitary matrix models}
	\label{sec:unitmat}
	
	Comparing the definition \eqref{eq:ZEschur} of the $E$-model with \eqref{eq:ZHschur}, one sees that its $K \to \infty $ limit, with fixed $N_1, N_2$ coincides with the $H$-model:
	\begin{equation*}
		\lim_{K \to \infty} \mathcal{Z}_{E} (t) = \mathcal{Z}_H (t) .
	\end{equation*}
	In the scaling limit with $N_1, N_2$ growing together with $K$, it was found in \cite{CP:2013,CP:2015} that the presence of hard walls in the discrete matrix model \eqref{eq:ZEMeix} triggers a third order phase transition. For the case $N_1=N_2$, starting from a result of Baik \cite{Baik:2000} we will prove the phase transition from the point of view of the unitary matrix model. This is done in Section \ref{sec:Baiksym}. For the general case $N_1 \ne N_2$, however, the potential of the unitary matrix model is complex-valued, and the large $K$ asymptotic becomes more involved. This topic is analyzed in Section \ref{sec:Baikgen}.\par
	As the asymptotic behaviour at large $K$ does not depend on $N_1, N_2$ being integers, we consider the matrix models arising from Toeplitz determinants with more general symbols
	\begin{align*}
		\sigma_H (z;t) & = \left( 1-t z \right)^{- \beta_1} \left( 1-t z^{-1} \right)^{- \beta_2} , \\
		\sigma_E (z;t) & = \left( 1+t z \right)^{\beta_1} \left( 1+t z^{-1} \right)^{\beta_2} ,
	\end{align*}
	and, without loss of generality, we assume $0 \le \beta_1 \le \beta_2$. We also introduce the notation
	\begin{equation}
	\label{eq:defbetaandv}
		\beta_1 = \beta (1-v) , \quad  \beta_2 = \beta (1+v) ,
	\end{equation}
	where $\beta =(\beta_1 + \beta_2)/2$ is the average power and $0 \le v \le 1$ measures the asymmetry in $z \leftrightarrow z^{-1}$. We then take the $K \to \infty$ limit with\footnote{\label{foot:gamma}This is the inverse of the usual 't Hooft coupling as defined in gauge theories, but here we adopt to the notation of \cite{BDJ:1999,Baik:2000}. Consistently, ``weak'' and ``strong'' coupling will refer to the values of $\gamma$.}
	\begin{equation}
	\label{eq:defgamma}
		\gamma := \frac{\beta}{K} \ \text{ fixed} .
	\end{equation}
	As customary, when studying the large rank behaviour of matrix models, we introduce the density of eigenvalues
	\begin{equation}
	\label{eq:defrho}
		\rho (\varphi ) = \frac{1}{K} \sum_{j=1} ^K \delta ( \varphi - \varphi_j ),
	\end{equation}
	which at large $K$ becomes a continuous function of $\varphi$, with compact support and normalized so that
	\begin{equation*}
		\int_{-\pi} ^{\pi} \dd \varphi \rho ( \varphi ) = 1 .
	\end{equation*}
	In each of the cases considered below, we will find the eigenvalue density $\rho$ and use it to evaluate the free energy in this limit, defined as:
	\begin{equation*}
		\mathcal{F} := - \frac{1}{K^2} \log \mathcal{Z} .
	\end{equation*}\par
	Our calculations are based on standard saddle point techniques, and we omit them from the main text and refer to the Appendices \ref{app:saddlecalc} and \ref{app:FEcalc}. We will show that all the models undergo a phase transition when a gap opens in the support of the eigenvalue density, as schematized in Figure \ref{fig:gapPT}.
	
		\begin{figure}[bth]
			\centering	\includegraphics[width=0.28\textwidth]{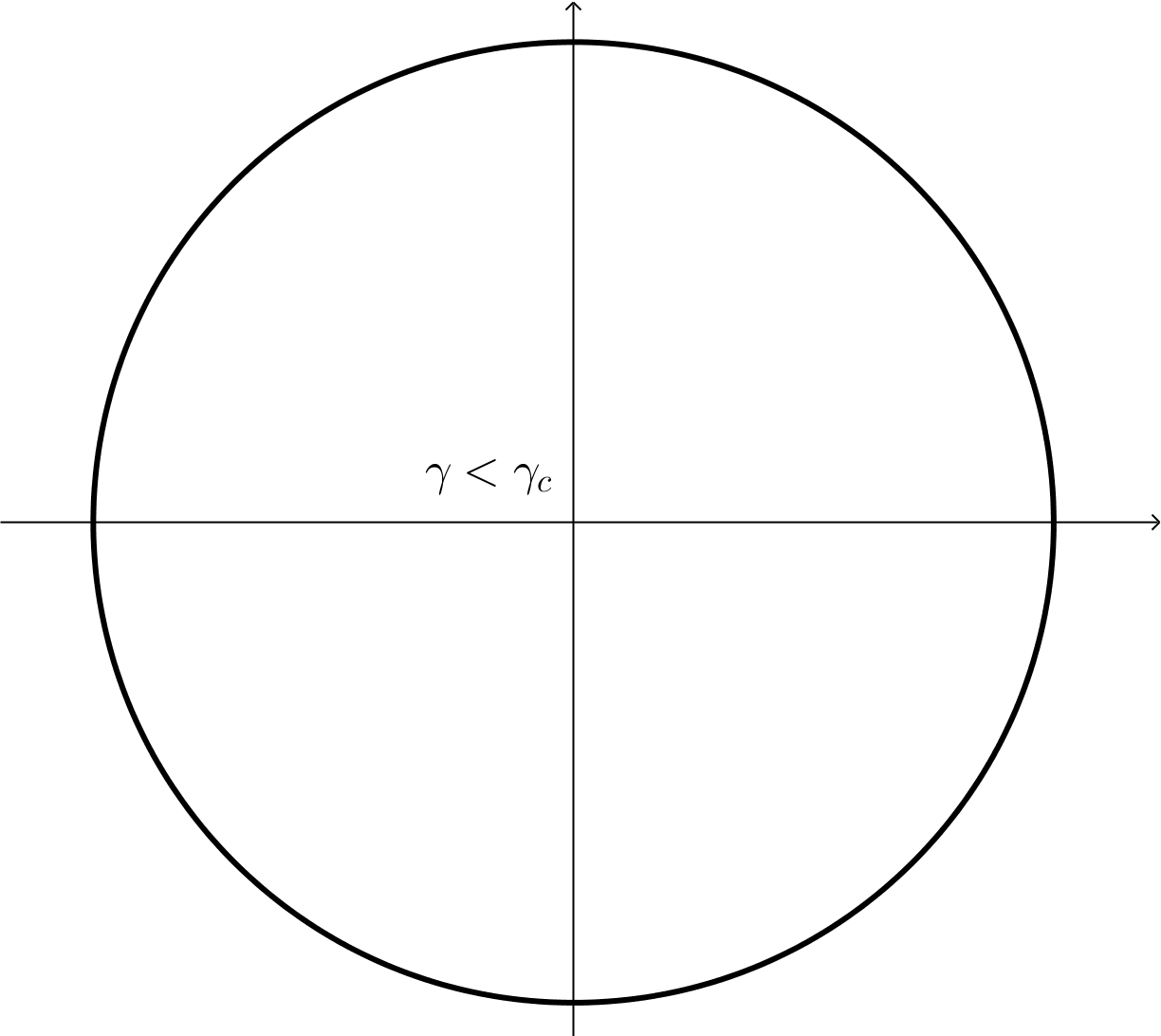}%
	\hspace{0.1\textwidth}	\includegraphics[width=0.28\textwidth]{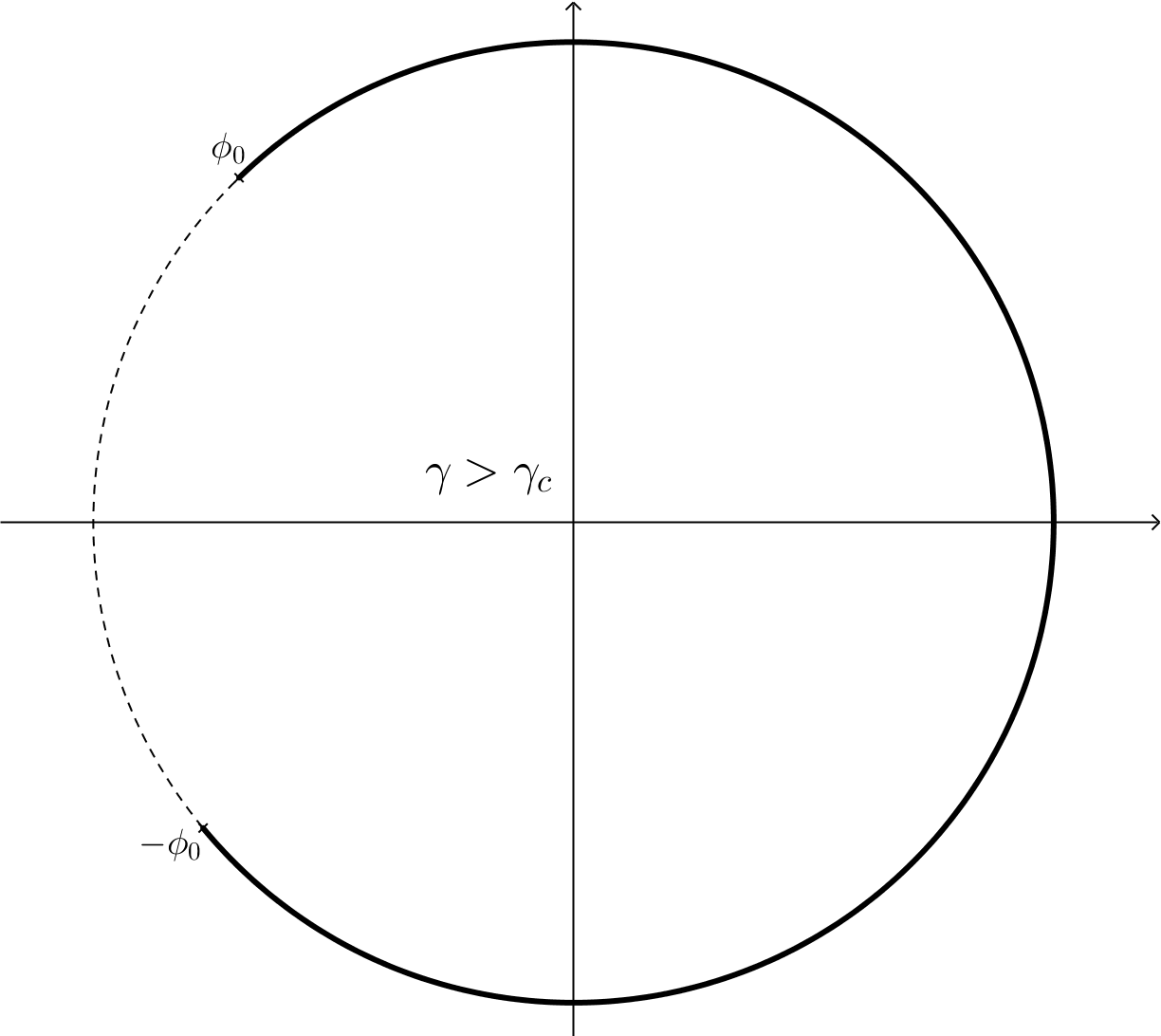}
	\caption{Increasing the coupling $\gamma$, the support of the eigenvalue density $\rho$ develops a gap, signalling a phase transition.}
		\label{fig:gapPT}
		\end{figure}
	
	\subsection{Phase transition: symmetric case}
	\label{sec:Baiksym}
	
		We first focus on the symmetric case $\beta_1=\beta_2 \equiv \beta$, while the analysis of the more general case $\beta_1 \ne \beta_2$ is undertaken later in Section \ref{sec:Baikgen}.\par
		Hence, the matrix models we analyze are:
		\begin{align}
				\ZUsymH & =  \frac{1}{K!} \int_{[- \pi, \pi]^K} \frac{ \dd^K \varphi }{ (2 \pi )^K } \prod_{1 \le j < k \le K} \left\lvert e^{\ii \varphi_j } - e^{\ii \varphi_k} \right\rvert^2 \prod_{j=1} ^K \left[ (1-t e^{\ii \varphi_j})(1-t e^{- \ii \varphi_j}) \right]^{-\beta} , \label{eq:BaikH} \\
				\ZUsymE & =  \frac{1}{K!} \int_{[- \pi, \pi]^K} \frac{ \dd^K \varphi }{ (2 \pi )^K } \prod_{1 \le j < k \le K} \left\lvert e^{\ii \varphi_j } - e^{\ii \varphi_k} \right\rvert^2 \prod_{j=1} ^K \left[ (1+t e^{\ii \varphi_j})(1+t e^{- \ii \varphi_j}) \right]^{\beta} ,	\label{eq:BaikE}	
		\end{align}
		and we recall that $\ZUsymH$ admits an exact solution through the Cauchy identity, whilst $\ZUsymE$ does not. Nevertheless we have $\lim_{K \to \infty} \ZUsymE = \ZUsymH$.\par
		In \cite{Baik:2000}, Baik proved that the first system, described by the partition function \eqref{eq:BaikH}, and which we will call for simplicity the $H$-model, undergoes a phase transition at large $K$. We prove that the second system \eqref{eq:BaikE}, which we call $E$-model, undergoes the same phase transition. We prove it solving a singular integral equation in Appendix \ref{app:saddlecalc}, but in fact the result may also be directly obtained from \cite{Baik:2000}, with minor changes.
		
		\subsubsection{The $H$-model}
			Consider the matrix integral in \eqref{eq:BaikH}, and take the limit $K \to \infty $ with $\beta$ scaling as in \eqref{eq:defgamma}. The leading contribution to $\ZUsymH$ comes from the solution to the system of saddle point equations that, with the help of the eigenvalue density $\rho_H$ as defined in \eqref{eq:defrho}, can be rewritten as a single singular integral equation:
			\begin{equation}
			\label{SPEBaikH}
				- \ii \gamma t \left[ \frac{ e^{\ii \varphi} }{1-t e^{\ii \varphi} } - \frac{ e^{- \ii \varphi} }{1-t e^{- \ii \varphi} } \right] = \mathrm{P} \int \dd \vartheta \rho_H (\vartheta) \cot \left( \frac{ \varphi - \vartheta }{2} \right) ,
			\end{equation}
			where the symbol $\mathrm{P} \int$ means principal value of the integral, and $\rho_H$ is the eigenvalue density for the specific model considered presently. The details of the solution to eq. \eqref{SPEBaikH} are spelled in Appendices \ref{app:ZUHweak} and \ref{app:ZUHstrong}. Two phases exists, separated by the critical curve \cite{Baik:2000}
			\begin{equation*}
				\gamma = \frac{1+t}{2t} =: \gamma_{c,H} (t) .
			\end{equation*}
			The eigenvalue density, plotted in Figure \ref{fig:plotrhoH} for various $t$ and $\gamma$, reads:
			\begin{equation}
				\rho_H (\varphi) = \begin{cases} \frac{1}{2\pi} \left[ 1 + 2 \gamma t \left( \frac{\cos \varphi - t }{ (1 -t)^2 +4t \left(\sin \frac{ \varphi}{2} \right)^2 } \right)  \right]  , &  \gamma \le \gamma_{c,H} (t) , \\ \frac{2 (\gamma-1) t }{\pi} \left( \frac{ \cos \frac{\varphi}{2} }{ (1-t)^2 +4t \left( \sin \frac{\varphi}{2} \right)^2 } \right)\sqrt{ \left( \sin \frac{\phi_0}{2} \right)^2 -  \left( \sin \frac{\varphi}{2} \right)^2  } , & \gamma > \gamma_{c,H} (t) \end{cases} 
			\end{equation}\par
		and allows to evaluate the free energy $\mfsymH$, obtaining:
			\begin{equation}
			\label{eq:FEBaikH}
			  \mfsymH = \begin{cases}  - \gamma^2 \log (1-t^2) ,  & \gamma \le \gamma_{c,H} (t) , \\ - (2 \gamma-1) \log (1-t) - \frac{1}{2} \log t + \mathcal{C}_{H} (\gamma) ,  & \gamma > \gamma_{c,H} (t) , \end{cases} 
			\end{equation}
			where $\mathcal{C}_{H} (\gamma)$ is $t$-independent. See Appendices \ref{app:Fbothweak} and \ref{app:Fbothstrong} for the calculation of the free energy. From the latter expression one sees that $\frac{ \dd \mfsymH }{ \dd t} $ and $\frac{ \dd^2 \mfsymH }{ \dd t^2} $ are continuous functions for all values of $\gamma$, while
			\begin{equation*}
				\lim_{\gamma \uparrow \gamma_{c,H} } \frac{ \dd^3 \mfsymH }{ \dd t^3} - \lim_{\gamma \downarrow \gamma_{c,H} } \frac{ \dd^3 \mfsymH }{ \dd t^3} = \frac{1}{t^3 (1-t^2)} ,
			\end{equation*}
			and therefore the system undergoes a third order phase transition at the critical curve $\gamma_{c,H} (t)= \frac{1+t}{2t}$.

		\begin{figure}[bth]
		\centering
		\includegraphics[width=0.3\textwidth]{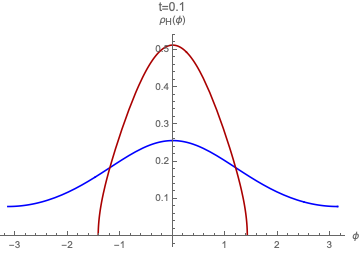}%
		\includegraphics[width=0.3\textwidth]{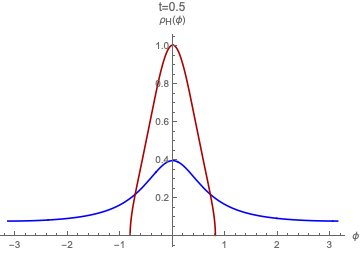}%
		\includegraphics[width=0.3\textwidth]{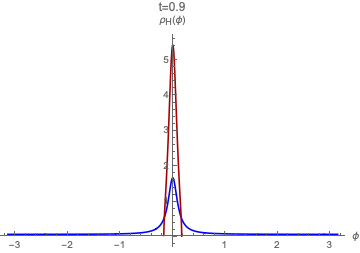}
		\caption{Eigenvalue density $\rho_H (\phi)$. The blue curve is at $\gamma = \frac{1}{2} \gamma_{c,H} (t)$ and the red curve is at $\gamma = 2 \gamma_{c,H} (t)$, for $t=$ 0.1 (left), 0.5 (center), 0.9 (right).}
		\label{fig:plotrhoH}
		\end{figure}
			
		\subsubsection{The $E$-model}
			We now turn to the second matrix model, defined in eq. \eqref{eq:BaikE}. The leading contribution in the large $K$ limit, with scaling \eqref{eq:defgamma}, is obtained solving the saddle point equation
			\begin{equation}
			\label{SPEBaikE}
				- \ii \gamma t \left[ \frac{ e^{\ii \varphi} }{1+t e^{\ii \varphi} } - \frac{ e^{- \ii \varphi} }{1+t e^{- \ii \varphi} } \right] = \mathrm{P} \int \dd \vartheta \rho_E (\vartheta) \cot \left( \frac{ \varphi - \vartheta }{2} \right) .
			\end{equation}
			We solve this singular integral equation in Appendices \ref{app:ZUEweak} and \ref{app:ZUEstrong}. From direct comparison of the integral representation of $\ZUsymH$ and $\ZUsymE$ in \eqref{eq:BaikH} and \eqref{eq:BaikE}, one would expect that, the solution to the second model is related to the solution to the first model by
			\begin{equation*}
				(\gamma, t) \mapsto (- \gamma, -t) .
			\end{equation*}
			Direct calculations prove that this is true and, in particular, the system undergoes a phase transition along the critical curve
			\begin{equation*}
				\gamma = \frac{1-t}{2t} =: \gamma_{c,E} (t) .
			\end{equation*}
			The eigenvalue density in the $E$-model, plotted in Figure \ref{fig:plotrhoE} for different values of $t$ and $\gamma$, is:
			\begin{equation*}
				\rho_E (\varphi) = \begin{cases} \frac{1}{2\pi} \left[ 1 + 2 \gamma t \left( \frac{\cos \varphi + t }{ (1 +t)^2 -4t \left(\sin \frac{ \varphi}{2} \right)^2 } \right) \right] , & \gamma \le \gamma_{c,E} (t) , \\ \frac{2 (\gamma+1) t }{\pi} \left( \frac{ \cos \frac{\varphi}{2} }{ (1+t)^2 -4t \left( \sin \frac{\varphi}{2} \right)^2 } \right)\sqrt{ \left( \sin \frac{\phi_0}{2} \right)^2 -  \left( \sin \frac{\varphi}{2} \right)^2  } , & \gamma > \gamma_{c,E} (t) , \end{cases} 
			\end{equation*}
			which allows to compute the free energy $\mfsymE $ at large $K$ (see Appendices \ref{app:Fbothweak} and \ref{app:Fbothstrong}), and the final result is:
			\begin{equation}
			\label{eq:FEBaikE}
			  \mfsymE = \begin{cases}  - \gamma^2 \log (1-t^2) , &  \gamma \le \gamma_{c,E} (t) , \\  (2 \gamma+1) \log (1+t) - \frac{1}{2} \log t + \mathcal{C}_{E} (\gamma) , & \gamma > \gamma_{c,E} (t) , \end{cases} 
			\end{equation}
			where $\mathcal{C}_{E} (\gamma)$ is $t$-independent. Taking derivatives, one finds again that first and second derivatives are continuous functions, while
			\begin{equation*}
				\lim_{\gamma \uparrow \gamma_{c,E} } \frac{ \dd^3 \mfsymE }{ \dd t^3} - \lim_{\gamma \downarrow \gamma_{c,E} } \frac{ \dd^3 \mfsymE }{ \dd t^3} = \frac{1}{t^3 (1-t^2)} ,
			\end{equation*}
			thus the phase transition is of third order also in this case.
			
				\begin{figure}[bth]
		\centering
		\includegraphics[width=0.3\textwidth]{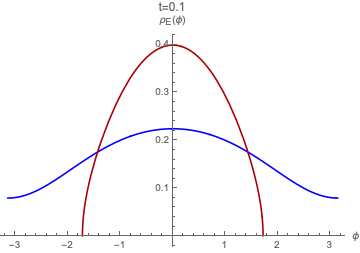}%
		\includegraphics[width=0.3\textwidth]{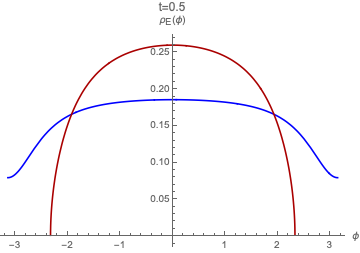}%
		\includegraphics[width=0.3\textwidth]{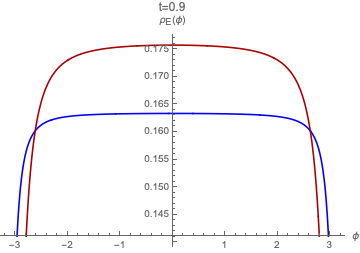}
		\caption{Eigenvalue density $\rho_E (\phi)$. The blue curve is at $\gamma = \frac{1}{2} \gamma_{c,E} (t)$ and the red curve is at $\gamma = 2 \gamma_{c,E} (t)$, for $t=$ 0.1 (left), 0.5 (center), 0.9 (right).}
		\label{fig:plotrhoE}
		\end{figure}
			
    	\subsubsection{Double-scaling limit}
	        We have shown that the free energy, and consequently the phase structure and the critical curve, of the matrix model with symmetric weight of $E$ type at large $K$ and large $N$ agrees with the equivalent descriptions as a Meixner ensemble with a hard wall studied in \cite{CP:2013}. In fact, we find an even stronger agreement between the two pictures, as also the correlations among eigenvalues near the edge of the distributions in a double-scaling limit match.  For the unitary ensemble, the double-scaling limit is described in \cite{Baik:2000}, along the lines of  \cite{BDJ:1999}. On the other hand, in the discrete ensemble, if the critical region is approached from the small coupling phase, the hard wall is not active and the results of \cite{Johansson:2000} directly apply. The double-scaling is the same in both cases, and leads to the Tracy--Widom law \cite{TracyWidom:1992}. A third consistency check comes from the asymptotics of the hypergeometric kernel: standard computations using a steepest descend method  (see for example \cite[Sec. 4]{OLect} for an introduction) show that the result in the double-scaling limit matches the Tracy--Widom behaviour.

		\subsection{Phase transition: general case}
		\label{sec:Baikgen}
			While in the previous Section \ref{sec:Baiksym} we focused on the analysis of two matrix models that are symmetric under $z \leftrightarrow z^{-1}$, we now allow the more general situation $\beta_1 \ne \beta_2$, in which the symmetry is lost.\par
			The matrix models we consider here are
			\begin{align}
				\ZguH (t) &= \frac{1}{K!} \int_{[-\pi,\pi]^K} \frac{ \dd ^K \varphi }{(2 \pi)^K} \prod_{1 \le j < k \le K} \lvert e^{\ii \varphi_j} - e^{\ii \varphi_k} \rvert \prod_{j=1} ^K \left( 1-t e^{\ii \varphi_j} \right)^{- \beta_1} \left( 1-t e^{- \ii \varphi_j} \right)^{-\beta_2 } \label{eq:ZgenBaikH},	\\
				\ZguE (t) &= \frac{1}{K!} \int_{[-\pi,\pi]^K} \frac{ \dd ^K \varphi }{(2 \pi)^K} \prod_{1 \le j < k \le K} \lvert e^{\ii \varphi_j} - e^{\ii \varphi_k} \rvert \prod_{j=1} ^K \left( 1+t e^{\ii \varphi_j} \right)^{\beta_1} \left( 1+t e^{- \ii \varphi_j} \right)^{\beta_2 } \label{eq:ZgenBaikE} .
			\end{align}
			The weight functions are complex valued, thus we expect the eigenvalue densities at large $K$ to be complex-valued functions. We rewrite $(\beta_1,\beta_2)$ in terms of the parameters $(\beta, v)$ defined in \eqref{eq:defbetaandv}, and consider the limit $K \to \infty$ with scaling of $\beta$ as introduced in \eqref{eq:defgamma}.
			
			\subsubsection{The $H$-model}
				We first focus on the behaviour of the partition function \eqref{eq:ZgenBaikH} in the limit described above, and show how the discussion of the symmetric case in Section \ref{sec:Baiksym} is modified. Details of the calculations can be retrieved in Appendices \ref{app:ZgenHweak} and \ref{app:ZgenHstrong} for weak and strong coupling respectively.\par
				The leading contributions at large $K$ may be encoded in the eigenvalue density $\rho_H$ which solves the integral equation
				\begin{equation}
				\label{eq:SPEgenBaikH}
					- \ii \gamma t \left[ \frac{ (1-v) e^{\ii \varphi} }{1-t e^{\ii \varphi} } - \frac{ (1+v) e^{- \ii \varphi} }{1-t e^{- \ii \varphi} } \right] = \mathrm{P} \int \dd \vartheta \rho_{H} (\vartheta) \cot \left( \frac{ \varphi - \vartheta }{2} \right) .
				\end{equation}
				The asymmetry parameter $v$ complexifies the left hand side of the latter equation, and the resulting eigenvalue density is complex (see Appendices \ref{app:ZgenHweak} and \ref{app:ZgenHstrong}):
				\begin{equation*}
					\rho_H (\varphi )= \begin{cases} \frac{1}{2\pi} \left\{ 1 + 2 \gamma \sum_{n=1} ^{\infty} t^n \left[ \cos (n \varphi) - \ii v \sin (n \varphi) \right] \right\} , &  \gamma \le \gamma_{c,H} ,  \\ \frac{2 t  ( \gamma -1)  }{\pi (1-t) } \left[   \frac{  (1-t) \cos \left( \frac{ \varphi }{2} \right) - \ii v (1+t) \sin \left( \frac{ \varphi }{2} \right) }{  (1-t)^2 + 4 t \left( \sin \left( \frac{ \varphi }{2} \right) \right)^2 } \right] \sqrt{ \left( \sin \left( \frac{ \phi_0 }{2} \right) \right)^2 - \left( \sin \left( \frac{ \varphi }{2} \right) \right)^2 } , &  \gamma > \gamma_{c,H} , \end{cases} 
				\end{equation*}
				with critical value $\gamma_{c,H} = \frac{1+t}{2t}$, the same as in the symmetric case. The corresponding free energy $\mfgenuH = - K^{-2} \log \ZguH$, computed in Appendix \ref{app:FEgenboth}, can be written as:
				\begin{equation}
				\label{eq:FEgenH}
			  		\mfgenuH = \mfsymH + v^2 \Delta \mathcal{F} _H , \qquad \Delta \mathcal{F} _H  = \begin{cases}  - \gamma^2  \log (1-t^2) , &  \gamma \le \gamma_{c,H} (t) ,  \\ \log (1+t)  - \frac{1}{2} \log t, &  \gamma > \gamma_{c,H} (t) . \end{cases} 
				\end{equation}\par
				The introduction of the asymmetry reduces the order of the phase transition from third to second, with:
				\begin{equation*}
					\lim_{\gamma \uparrow \gamma_{c,H} } \frac{ \dd^2 \mfgenuH }{ \dd t^2} - \lim_{\gamma \downarrow \gamma_{c,H} } \frac{ \dd^2 \mfgenuH }{ \dd t^2} = - \frac{v^2}{t^2 (1-t)} ,
				\end{equation*}
				
			\subsubsection{The $E$-model}
				Consider now the second matrix model, the $E$-model of eq. \eqref{eq:ZgenBaikE}, and take its scaled limit as in \eqref{eq:defbetaandv}-\eqref{eq:defgamma}. The saddle point equation reads
				\begin{equation}
				\label{eq:SPEgenBaikE}
					- \ii \gamma t  \left( (1-v) \frac{ e^{\ii \varphi} }{1+t e^{\ii \varphi}} -  \frac{ (1+v) e^{ - \ii \varphi} }{1+t e^{- \ii \varphi}} \right) = \mathrm{P} \int \dd \vartheta \rho _{E} (\vartheta)  \cot \frac{ \varphi - \vartheta }{2}  .
				\end{equation}
				The eigenvalue density is complexified by the presence of the asymmetry parameter $v$ (see Appendices \ref{app:ZgenEweak} and \ref{app:ZgenEstrong}):
				\begin{equation*}
					\rho_E (\varphi )= \begin{cases} \frac{1}{2 \pi} \left[1 + 2 \gamma t \left( \frac{ \cos (\varphi ) - \ii v \sin (\varphi ) }{ 1 + t^2 + 2 t \cos (\varphi)  } \right) \right] , & \gamma \le \gamma_{c,E} , \\ \frac{ 2 t (\gamma+1) }{\pi (1+t) } \left[ \frac{ (1+t) \cos \left( \frac{\varphi}{2} \right) - \ii v (1-t) \sin  \left( \frac{\varphi}{2} \right) }{ (1+t)^2 -4t  \left( \sin\left( \frac{\varphi}{2} \right)\right)^2 } \right] \sqrt{ \left( \sin \left( \frac{\phi_0}{2} \right) \right)^2 -  \left( \sin \left( \frac{\varphi}{2} \right) \right)^2  } , & \gamma > \gamma_{c,E} , \end{cases}
				\end{equation*}
				with the same critical value as for the symmetric case: $\gamma_{c,E} = \frac{1-t}{2t}$. The free energy is (see Appendix \ref{app:FEgenboth}):
				\begin{equation}
				\label{eq:FEgenE}
				  \mfgenuE = \mfsymE + v^2 \Delta \mathcal{F} _E , \qquad \Delta \mathcal{F} _E  =  \begin{cases}  - \gamma^2 \log (1-t^2) , \quad \gamma \le \gamma_{c,E} (t) , \\ \log (1-t)  - \frac{1}{2} \log t ,  \quad  \gamma > \gamma_{c,E} (t) .  \end{cases}  
				\end{equation}\par
				The first derivative of the free energy is continuous, but the second derivative is not:
				\begin{equation*}
					\lim_{\gamma \uparrow \gamma_{c,E} } \frac{ \dd^2 \mfsymE }{ \dd t^2} - \lim_{\gamma \downarrow \gamma_{c,E} } \frac{ \dd^2 \mfsymE }{ \dd t^2} = - \frac{v^2}{t^2 (1+t)} ,
				\end{equation*}
				thus the phase transition is second order. We plot the first derivative of free energy as a function of $\gamma$, at different values of $t$, in Figure \ref{fig:3plotsfprime}.
				
				\begin{figure}[bth]
					\centering
					\includegraphics[width=0.45\textwidth]{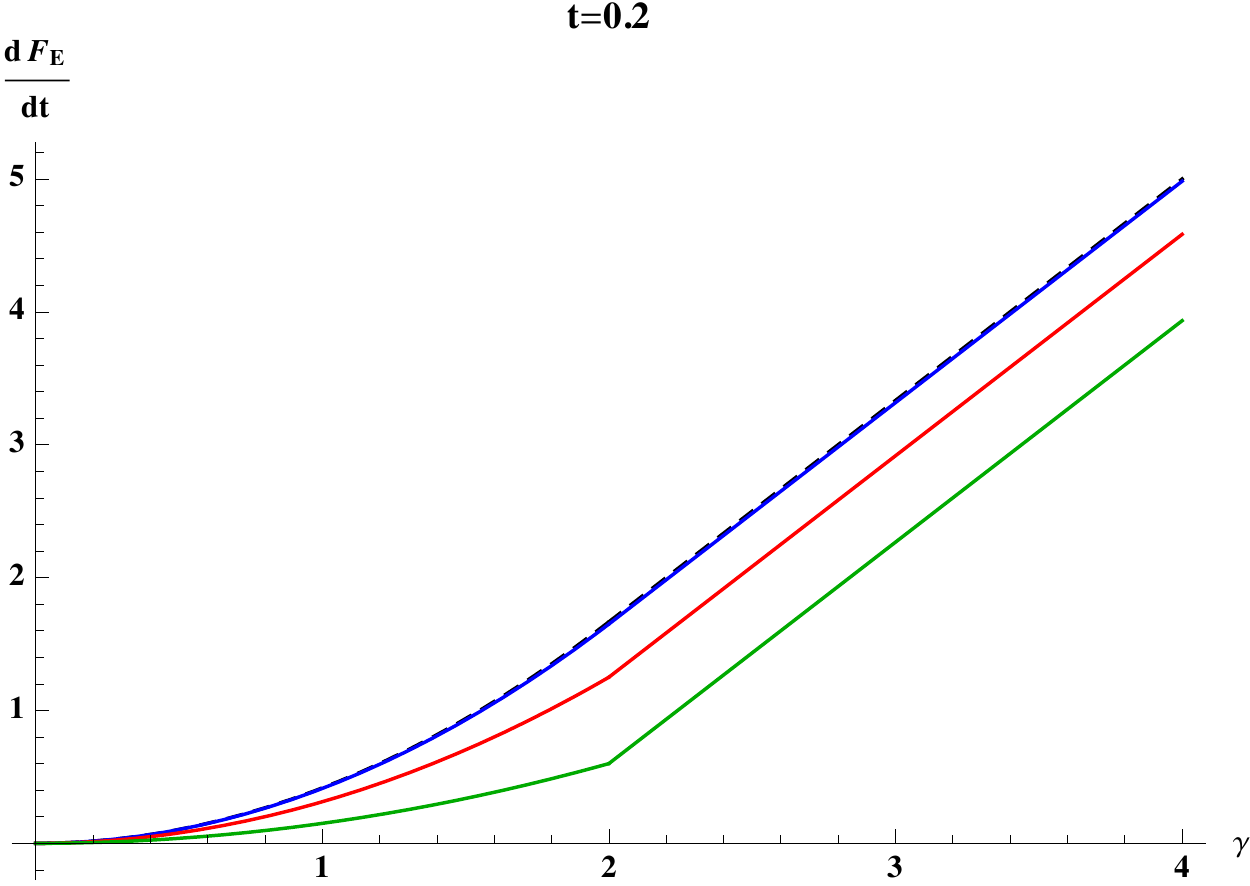}%
					\hspace{0.08\textwidth}
					\includegraphics[width=0.45\textwidth]{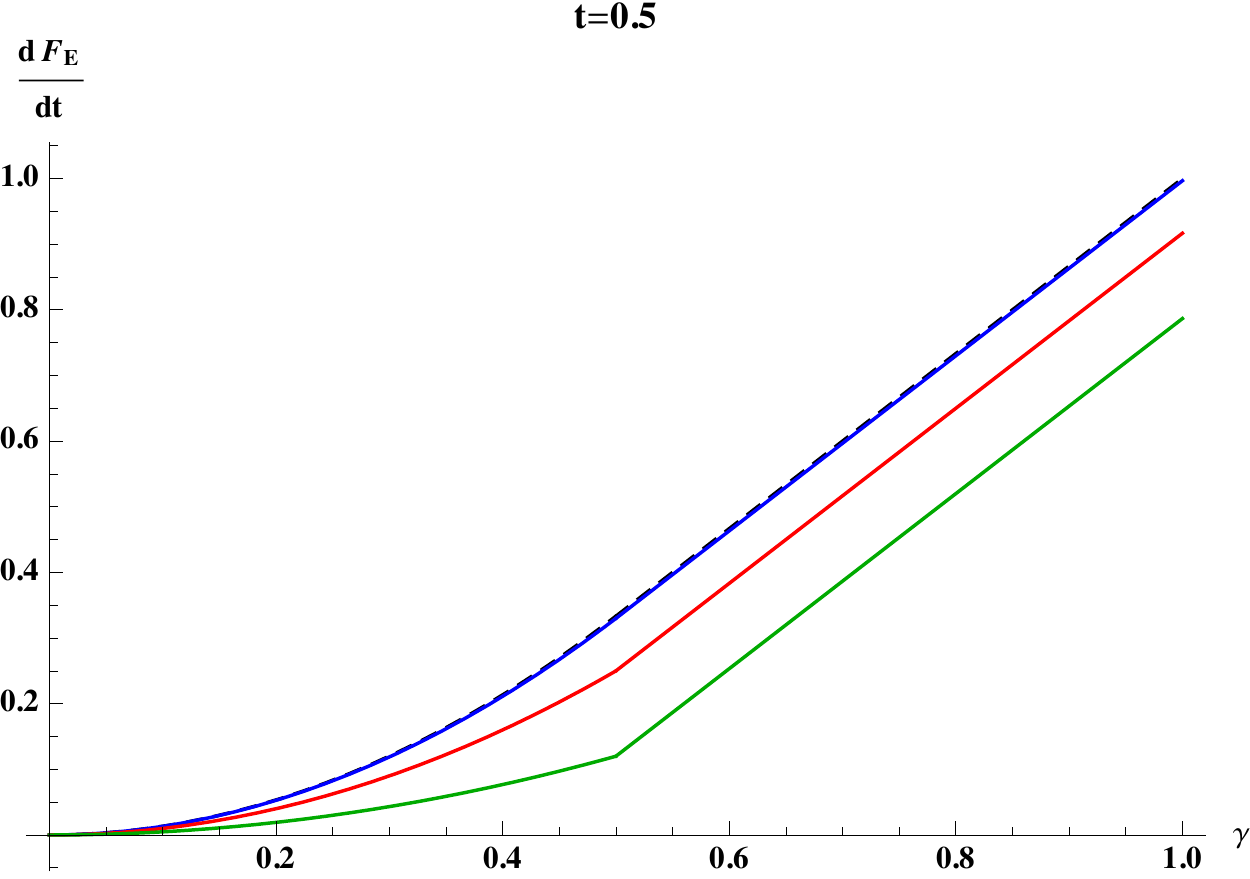}\\
					\vspace{0.5cm}
					\includegraphics[width=0.45\textwidth]{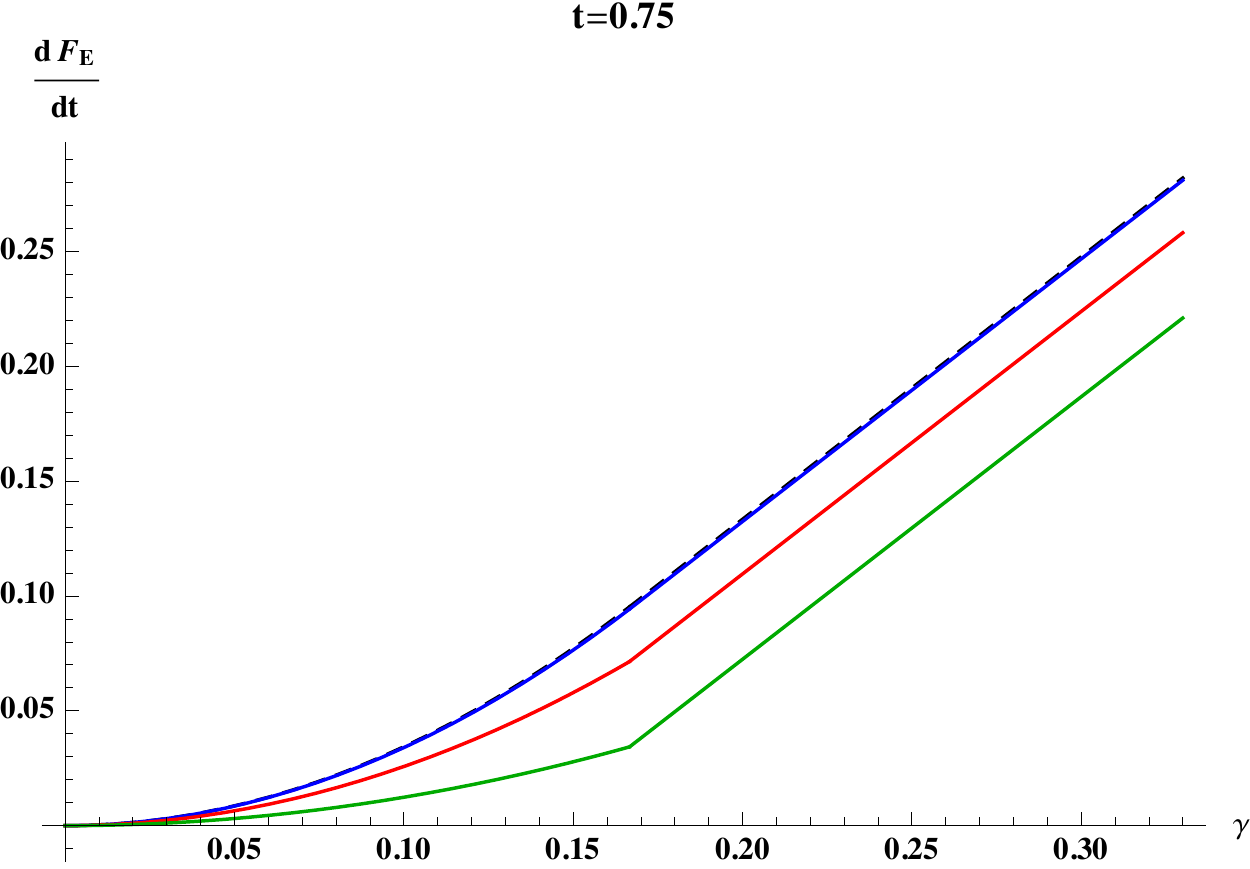}
					\caption{Planar limit of $\frac{ \dd \mathcal{F}_E }{\dd t} $ as a function of $\gamma$, for $t=0.2$ (up left), $t=0.5$ (up right), $t=0.75$ (down). In each plot, there appear $\frac{ \dd \mathcal{F}_E }{\dd t} $ for different values of the asymmetry parameter: $v=0.1$ (blue), $v=0.5$ (red), and $v=0.75$ (green). The dashed back line is the symmetric case $v=0$. The point at which the curve is continuous but with discontinuous derivative becomes more and more visible as $v$ is increased.}
									\label{fig:3plotsfprime}
				\end{figure}

	\subsection{The Gross--Witten limit}
	\label{sec:GWlimit}
		Consider the two matrix models of Section \ref{sec:Baiksym}, defined in equations \eqref{eq:BaikH} and \eqref{eq:BaikE}. The potential of those models can be written as:
		\begin{equation*}
			V_{E / H} ^{\mathrm{sym.}} (z) = \pm \beta \left[ \log \left( 1 \pm t z \right) +  \log \left( 1\pm t z^{-1} \right) \right] , 
		\end{equation*}
		with $+$ sign for the $E$ and $-$ for the $H$. We send $t \to 0$ and $\beta \to \infty$, keeping their product $\beta_{\mathrm{GW}} := t \beta$ fixed. This gives:
		\begin{equation*}
			V^{\mathrm{sym.}} _{E / H} (z) \to \beta_{\mathrm{GW}} \left( z + z^{-1} \right) ,
		\end{equation*}
		which is the potential of the Gross--Witten matrix model \cite{GW80,Wadia}. Note that the limit is the same for both the $E$- and the $H$-model. In the more general, non-symmetric case, with potential
		\begin{equation*}
			V_{E / H} ^{\mathrm{gen.}} (z) = \pm  \left[ \beta_1 \log \left( 1 \pm t z \right) +  \beta_2 \log \left( 1 \pm t z^{-1} \right) \right] ,
		\end{equation*}
		we pass from $(\beta_1, \beta_2)$ to $(\beta , v)$ as prescribed in \eqref{eq:defbetaandv}, and define $\beta_{\mathrm{GW}} := t \beta $. The same limit as above gives:
		\begin{equation*}
			V_{E / H} ^{\mathrm{gen.}} (z) \to \beta_{\mathrm{GW}} \left[  \left( z + z^{-1} \right) - v  \left( z - z^{-1} \right) \right],
		\end{equation*}
		which is the potential of the Gross--Witten model with a topological $\theta$-term \cite{Hisakado:1996a,Hisakado:1996b}. Since, by construction, we are in the regime $\lvert v \rvert < 1$, the Gross--Witten phase transition is still present, as discussed in \cite{Hisakado:1996b}.\par
		Within the setting laid down in the Introduction, the limit above with $\beta = N \in \mathbb{N}$ of the normalization of the $z$-measure leads to the normalization of the Poissonized Plancherel measure on partitions \cite{BOO,BOreview}. This is consistent with what we mentioned above, as the Poissonized Plancherel measure is mapped to the Gross--Witten unitary matrix model \cite{Johansson:1998}. On the other side, the Poissonized Plancherel measure is the $t \to 1$ limit of the Meixner ensemble \cite{Johansson:1999d} as well.

\subsection{Phase transition: from third to second order}
\label{sec:smoothdef}

For $0<\vert v \vert <1$, the potential of the unitary matrix models \eqref{eq:ZgenBaikH}-\eqref{eq:ZgenBaikE} becomes complex-valued. We have studied the large $K$ limit and showed that the model undergoes a second order phase transition, which becomes third order turning off the asymmetry parameter, $v \to 0$. Nevertheless, the phase transition in the Gross--Witten matrix model remains third order if the potential is modified into
\begin{equation*}
    \cos \varphi - \ii v \sin \varphi .
\end{equation*}

An alternative approach to face the matrix models with complexified potential is to analytically continue them in the sense of \cite{Witten:2010}. We relax the condition $\vert z \vert^2 =1$ and deform the integration cycle $\mathbb{S}^1 \rightsquigarrow \mathcal{C}_v$ in the complex plane so that the potential remains real along $\mathcal{C}_v$ (see Figure \ref{fig:def-contour}). Of course, for the deformation to be smooth, $\mathcal{C}_v$ should be a smooth Jordan curve, whose shape depends on $v$ and which is the unit circle $\mathbb{S}^1$ when $v=0$.\par

\begin{figure}[bth]
			\centering	\includegraphics[width=0.5\textwidth]{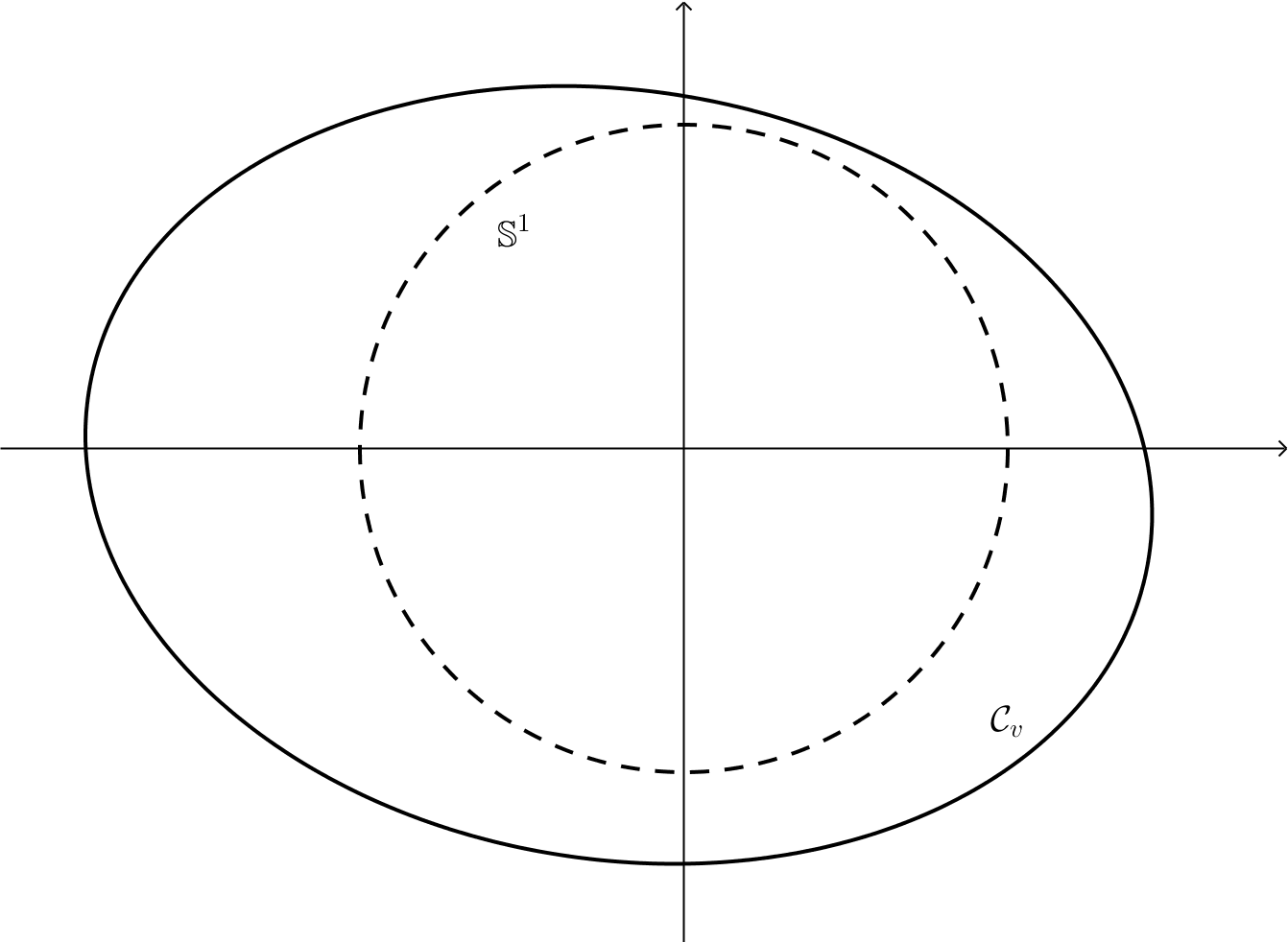}%
	\caption{We deform the integration contour from the unit circle to a Jordan curve along which the potential is real.}
		\label{fig:def-contour}
		\end{figure}

As a warm up, we apply the analytic continuation to the Gross--Witten model: we seek a contour $\mathcal{C}_{v}$ such that
\begin{equation*}
     \left[ \left( z + z^{-1} \right)  - v \left( z - z^{-1} \right) \right] \in \mathbb{R}
\end{equation*}
for all $z \in \mathcal{C}_v$. We find that $\mathcal{C}_v$ is simply a circle of radius $\sqrt{(1+v)/(1-v)}$. Moreover, the potential is 
\begin{equation*}
    V (z) = 2 \sqrt{1-v^2} \cos \varphi, \qquad  z \in \mathcal{C}_v, \ \varphi=\mathrm{Arg}~z .
\end{equation*}
Hence, not only it is real-valued, but we also recover the original, symmetric Gross--Witten model sitting on a rescaled circle, see Figure \ref{fig:GW-def-contour}. 
Furthermore, $\mathcal{C}_v$ is sent to infinity or shrinks to a point as $v \to 1$ or $v \to -1$ respectively, and we cannot prolong beyond these values. This provides a new perspective on the result of \cite{Hisakado:1996a,Hisakado:1996b}, where a drastic difference in the behaviour was observed crossing from $0<|v|<1$ to $|v|>1$.

\begin{figure}[bth]
			\centering	\includegraphics[width=0.35\textwidth]{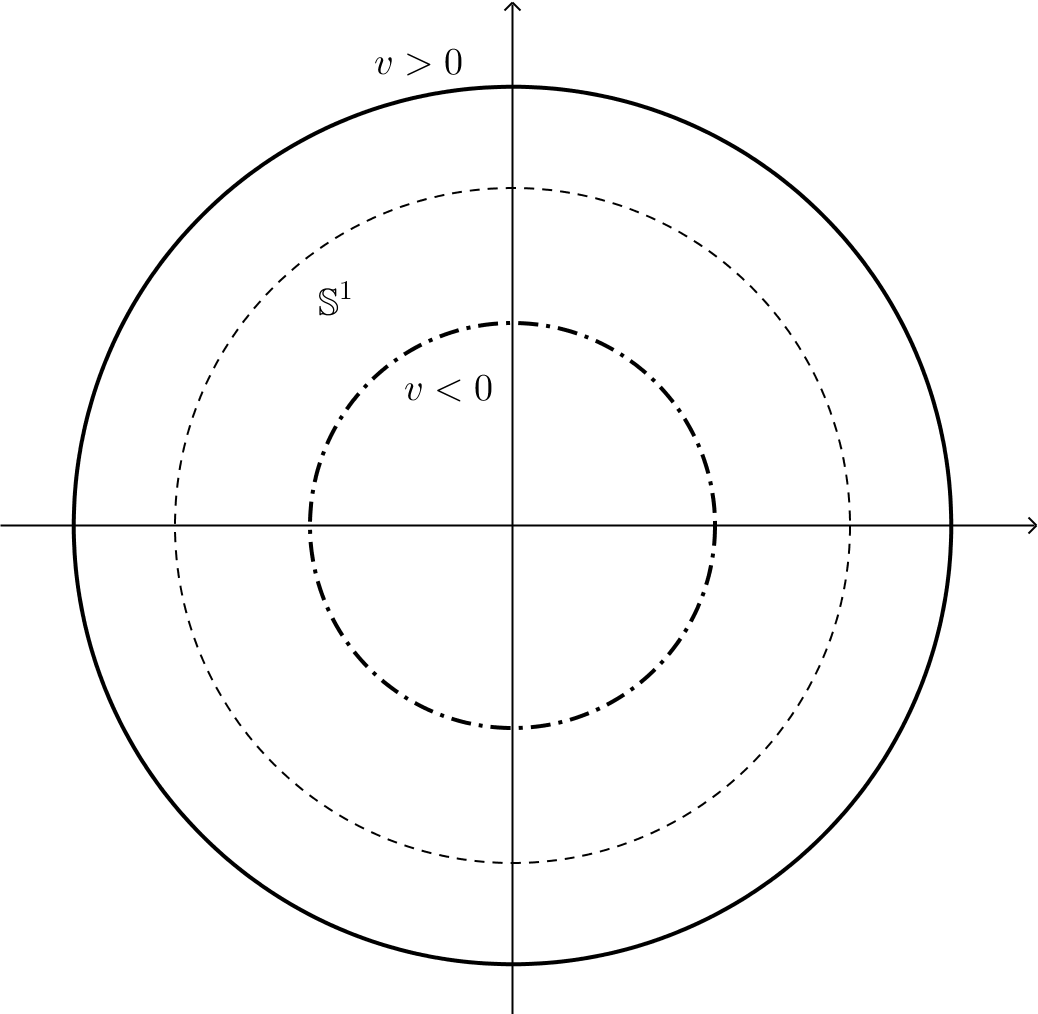}%
	\caption{The deformed contour for the complexified Gross--Witten model is a new circle, with bigger ($0<v<1$) or smaller ($-1<v<0$) radius.}
		\label{fig:GW-def-contour}
		\end{figure}

For a generic Laurent polynomial potential 
\begin{equation*}
    (1-v)\sum_{k=1} ^{n} c_k z^k +  (1+v)\sum_{k=1} ^{n} c_k z^{-k} , \qquad c_k \in \mathbb{R} , \ k=1, \dots , n ,
\end{equation*}
the smooth integration cycle $\mathcal{C}_v$ is determined as the locus in $\mathbb{C}$ that solves the algebraic equation
\begin{equation*}
    \Im \left( z^n V(z) \right) =0 , \qquad \forall z \in \mathcal{C}_v .
\end{equation*}
Stated more formally, provided the potential is a Laurent polynomial, we determine a suitable contour $\mathcal{C}_v$ as a rational algebraic curve of genus zero. If the curve has disconnected components, we simply retain as integration cycle the component homeomorphic to $\mathbb{S}^1$. Under the assumption that the coefficients $\left\{ c_k \right\}_{k=1, \dots, n}$ are generic, this guarantees that the locus $\mathcal{C}_v$ will be a small deformation of $\mathbb{S}^1$ if $V(z)$ is deformed by a small $v \ne 0$. See, for example, \cite{AlgGeom} for a textbook reference on affine algebraic curves.\par
However, the potentials of the matrix models with weights of $E$ and $H$ type are not polynomials but logarithms. The form of the potential is precisely the reason for the discontinuous behaviour as a function of $\vert N_1 - N_2 \vert $, because then imposing
\begin{equation}
\label{eq:condImV0}
	\Im V_{E / H} ^{\mathrm{gen.}} (z) = 0 , \qquad  z \in \mathcal{C}_v,
\end{equation}
does not define an algebraic curve embedded in $\mathbb{C}$. 
For every real $v \ne 0$ we find that the unique contour solving \eqref{eq:condImV0} is a half real line together with an open segment (the details depend on whether we consider the $E$- or the $H$-model), which is not a smooth deformation of $\mathbb{S}^1$.\par
\medskip
It is perhaps instructive to look at the problem from the converse perspective. In general, the existence of a Jordan curve $\mathcal{C}_{\underline{v}} \subset \mathbb{C}$, depending on a collection of parameters $\underline{v}$, along which the potential is real-valued is not guaranteed. For non-polynomial potential $V(z)$, it requires to place the model at certain special points of the parameter space. For the weights of $E$ and $H$ type, this dictates $v=0$, equivalently $N_1 =N_2$.\par

\subsubsection{Phase transition and universality}

To sum up our conclusions in one sentence, the unitary matrix models with symmetric $E$ and $H$ weight undergo a third order GW phase transition, but their more general, non-symmetric extensions have complex potentials and therefore fall out of the GW universality class. 
This ought to be contrasted with what happens in the equivalent description as a discrete ensemble with a hard wall. In that case, increasing the parameter $\gamma$ from zero, the system undergoes a third order phase transitions at the value $\gamma = \gamma_{c,E}$ when the hard wall becomes active. This holds both in the symmetric and the more general setting: they belong to the same universality class. See \cite{Cunden:2017,Cunden:2018} for detailed discussion on the universality of the third order phase transition in presence of a hard wall. We also stress that the discrete topology further constrains the eigenvalue density, but the condition is satisfied for all values of the parameters \cite{CP:2013,CP:2015}. Thus, the discrete nature of the ensemble plays no role in determining the phase structure of this model.\par
It is worth mentioning that the symmetric $H$-model has also been studied in the context of supersymmetric gauge theories, both with unitary \cite{ChenMekU} and orthogonal and symplectic symmetry \cite{ChenMekOSp}, and the third order phase transition was observed also in the latter case \cite{ChenMekPT}\footnote{The unitary matrix model of \cite[Eq. (2.18)]{ChenMekU} corresponds to a generic choice $(t,t^{\prime})$, thus the weight function is complex. Setting $t^{\prime} = t$ in \cite{ChenMekU} combines chiral and anti-chiral matter fields: it would be interesting to interpret the change of order of the phase transition from second to third in that context.}.

\section{Random matrix ensembles on the real line}

We now discuss two random matrix models on the real line that follow from a
change of variables of the unitary matrix models above described. One family
of random matrix ensembles is obtained through the 1d stereographic
projection, leading to a Cauchy--Romanovski type of ensembles and the other
one is based on the mapping $x=\cos \theta $ which is more useful when the
original matrix model has a symplectic or orthogonal symmetry (that is, corresponding with Toeplitz$\pm $Hankel determinants, instead of Toeplitz determinants), but
that also lead to explicit expressions for the symmetric unitary matrix
models, in terms of Wronskians of Chebyshev polynomials of the four types.

\subsection{Cauchy ensembles and Romanovski orthogonal polynomials}

\label{sec:stereoBaik}

We analyse the matrix models $\ZUsymH$ and $\ZUsymE$ defined in eq. %
\eqref{eq:BaikH} and \eqref{eq:BaikE} and consider a change of topology, in
which we remove a single point from the unit circle, to obtain $\mathbb{S}%
^{1}\setminus \left\{ \infty \right\} \cong \mathbb{R}$. We expect the asymptotic behavior of these models at large $K
$ to correspond to the ``gapped'' phase of the unitary matrix model (see Figure \ref{fig:gapPT}), which appears at strong coupling. We pass from angular variables to the real line using the stereographic
projection \cite[Sec. 2.5]{Forrester}: 
\begin{equation*}
e^{\ii\varphi }=\frac{1+\ii x}{1-\ii x},\qquad -\pi <\varphi <\pi ,\ x\in \R,
\end{equation*}%
as sketched in Figure \ref{fig:stereoproj}. 
\begin{figure}[bth]
			\centering	\includegraphics[width=0.6\textwidth]{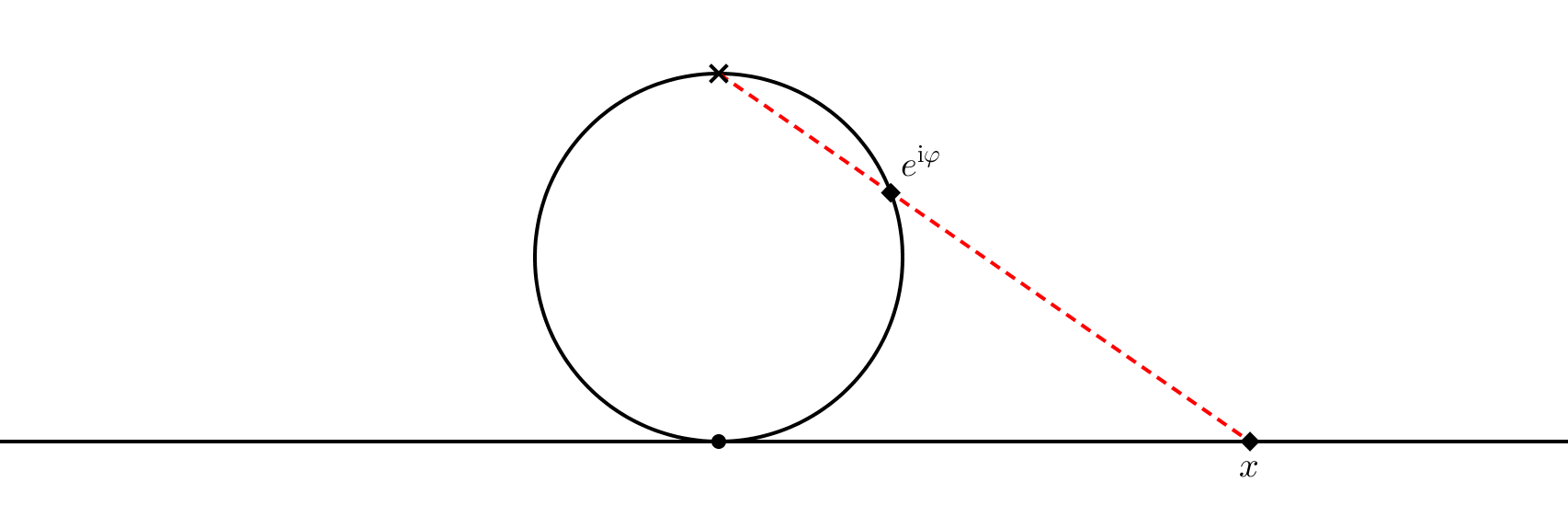}%
	\caption{The 1d stereographic projection.}
		\label{fig:stereoproj}
		\end{figure}

The Vandermonde determinant is mapped to: 
\begin{equation*}
\prod_{1\leq j<k\leq K}\left\vert e^{\ii\varphi _{j}}-e^{\ii\varphi
_{k}}\right\vert ^{2}\dd\varphi _{1}\dots \dd\varphi
_{K}=2^{K^{2}}\prod_{1\leq j<k\leq K}\left( x_{j}-x_{k}\right)
^{2}\prod_{j=1}^{K}\frac{1}{(1+x_{j}^{2})^{K}}\dd x_{1}\dots \dd x_{K},
\end{equation*}%
and, with the corresponding transformation of the symbols, 
the resulting matrix models are:
\begin{align}
	\mathcal{Z}_{H,\mathrm{stereo.}} ^{\mathrm{sym.}} &= \frac{ 2^{K^2} }{K! (2 \pi)^K } \int_{\R^K} \dd ^K x \prod_{1 \le j < k \le K} \left( x_j - x_k \right)^2  \prod_{j=1} ^K \frac{(1+x_j^2 )^{\beta-K}}{ \left[ (1-t)^2 + x_j^2 (1+t)^2 \right]^{\beta} } , \label{eq:stereosymH}\\
	\mathcal{Z}_{E,\mathrm{stereo.}} ^{\mathrm{sym.}} &= \frac{ 2^{K^2} }{K! (2 \pi)^K } \int_{\R^K} \dd ^K x \prod_{1 \le j < k \le K} \left( x_j - x_k \right)^2  \prod_{j=1} ^K \frac{ \left[ (1+t)^2 + x_j^2 (1-t)^2 \right]^{\beta} }{(1+x_j^2 )^{\beta+K}} . \label{eq:stereosymE}
\end{align}\par
Likewise, we can also use the stereographic projection in the more general,
non-symmetric, matrix models \eqref{eq:ZgenBaikH} and %
\eqref{eq:ZgenBaikE}. We obtain:
\begin{align}
	\mathcal{Z}_{H,\mathrm{stereo.}} ^{\mathrm{gen.}} &= \frac{ 2^{K^2} }{K! (2 \pi)^K } \int_{\R^K} \dd ^K x \prod_{1 \le j < k \le K} \left( x_j - x_k \right)^2  \prod_{j=1} ^K (1+x_j^2 )^{\beta -K}  \notag \\
		& \ \times \left[ \left( \frac{1}{  (1-t)^2 + x_j^2 (1+t)^2  } \right) \left( \frac{ (1-t) - 2 \ii t x_j + x_j ^2 (1+t) }{ (1-t) + 2 \ii t x_j + x_j ^2 (1+t) } \right)^v \right]^{\beta}, \label{eq:stereogenH} \\
	\mathcal{Z}_{E,\mathrm{stereo.}} ^{\mathrm{gen.}} &= \frac{ 2^{K^2} }{K! (2 \pi)^K } \int_{\R^K} \dd ^K x \prod_{1 \le j < k \le K} \left( x_j - x_k \right)^2  \prod_{j=1} ^K \frac{1}{(1+x_j^2 )^{\beta +K} } \notag \\
		& \ \times \left[  \left( (1+t)^2 + x_j^2 (1-t)^2 \right) \left( \frac{ (1+t) - 2 \ii t x_j + x_j ^2 (1-t) }{ (1+t) + 2 \ii t x_j + x_j ^2 (1-t) } \right)^v \right]^{\beta} . \label{eq:stereogenE}
\end{align}
where we used the redefinition of parameters $\beta _{1}=\beta (1-v)$ and $\beta _{2}=\beta (1+v)$.

\subsubsection{Exact evaluation}

\label{sec:exactevZEreal} In the limit case $t=1$, the partition function of
the symmetric $E$-model, after stereographic projection, takes the simple
form: 
\begin{equation}
\mathcal{Z}_{E,\mathrm{stereo.}}^{\mathrm{sym.}}(t=1)=\frac{2^{K^{2}+2K\beta
}}{K!(2\pi )^{K}}\int_{\R^{K}}\dd^{K}x\prod_{1\leq j<k\leq K}\left(
x_{j}-x_{k}\right) ^{2}\prod_{j=1}^{K}\left( 1+x_{j} ^2\right) ^{-K-\beta }.
\label{eq:Romanovskiens}
\end{equation}%
This random matrix ensemble has been studied as a Cauchy ensemble \cite{FWCUE}, Lorentz ensemble \cite{Brouwer}, and in this form it is a
particular case of the classical ensemble with weight \cite{Forrester,Koepf}
\begin{equation*}
\sigma (x)=(1-\ii x)^{-\alpha _{1}}(1+ \ii x)^{-\alpha _{2}},\qquad \alpha
_{1}+\alpha _{2}>1,\ x\in \R.
\end{equation*}%
This weight function satisfies the Pearson equation and hence
the associated random matrix ensemble is classical \cite{Koepf,HuertaMarcellan,Forrester},
although in many references the listing of classical ensembles appears restricted to Hermite, Laguerre and Jacobi. See \emph{e.g.}  \cite{HuertaMarcellan} for the expanded list of possible classical weights. The associated polynomials go under
many names, including pseudo-Jacobi \cite{Lesky,Leskybook,JordaanTookos,SongWong,Wongreview} due to their
(non-trivial) relationship with Jacobi polynomials, and also appear in \cite{Aldaya}. A
proper name seems Romanovski polynomials $\left\{ R_{n}^{(\alpha _{1},\alpha
_{2})}\right\} $ given \cite{Romanovski,Lesky}, see \cite{Raposo2007} for a review. We evaluate now the
matrix integral \eqref{eq:Romanovskiens} using Romanovski polynomials. They satisfy (the dependence on $(\alpha_1, \alpha_2)$ is understood): 
\begin{equation*}
\frac{1}{2\pi }\int_{\R}\dd x R_{m} R_{n}(1-\ii x)^{-\alpha _{1}}(1+\ii %
x)^{-\alpha _{2}}=h_{n}\delta _{mn}
\end{equation*}%
where we have chosen a normalization such that the polynomials $R_{n}$ are
monic, and their norm squared is \cite{AskeyBeta} 
\begin{equation*}
h_{n}=2^{-\alpha _{1}-\alpha _{2}+1}\left[ (-1)^{n+1}\frac{\Gamma (-n+\alpha
_{1}+\alpha _{2})}{n!(2n-\alpha _{1}-\alpha _{2}+1)\Gamma (-n+\alpha
_{1})\Gamma (-n+\alpha _{2})}\right] \left[ \frac{n! 2^{n} \Gamma (-2n+\alpha
_{1}+\alpha _{2})}{\Gamma (-n+\alpha _{1}+\alpha _{2})}\right] ^{2}
\end{equation*}%
where the second square bracket is the change of normalization to obtain
monic Romanovski polynomials from the normalization of \cite{AskeyBeta}. Therefore,
using 
\begin{equation*}
\mathcal{Z}_{E,\mathrm{stereo.}}^{\mathrm{sym.}}(t=1)= \left( \frac{2^{K^{2}+2K\beta
}}{K!} \right) \ K!  \prod_{n=0}^{K-1}h_{n}  ,
\end{equation*}%
with $h_{n}$ specialized to the case of interest $\alpha _{1}=\alpha
_{2}=K+\beta $, we obtain 
\begin{equation*}
\mathcal{Z}_{E,\mathrm{stereo.}}^{\mathrm{sym.}}(t=1) =G(K+1)\frac{G(\beta +1)^{2}G(K+2\beta +1)}{G(K+\beta +1)^{2}G(2\beta +1)},
\end{equation*}%
where the Barnes $G$-function \cite{barnes1899theory} is identified using%
\begin{equation*}
\prod_{n=0}^{K-1}\Gamma (K+\beta -n)=\frac{G(K+\beta +1)}{G(\beta +1)}.
\end{equation*}%
The result is the same indeed as eq. \eqref{eq:exactZEt1}. Therefore, the $%
\mathcal{Z}_{E}(t=1)$ is computed in three different ways: as a discrete
ensemble on a finite set, as a unitary ensemble, and as the Cauchy ensemble
on the real line. The tools used in each of the three approaches are,
respectively: the Hahn polynomials \cite{CP:2015}, the Toeplitz determinant
with symbol with a pure Fisher--Hartwig singularity \cite{BS:1985}, and the
Romanovski polynomials, as represented in Figure \ref{fig:diagt1}.

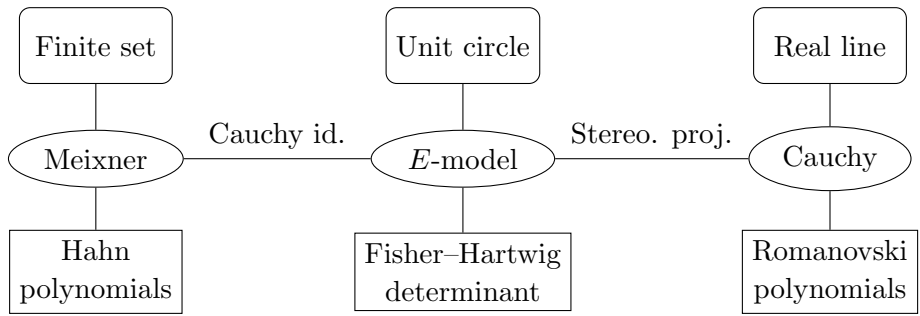
\begin{figure}[bth]
\centering
\begin{tikzpicture}[auto,node distance=2.8cm]
	\node[entity, rounded corners] (unit)  {Unit circle}
	child{node[attribute] (Ew) {$E$-model}
	child{node[entity, align=center] {Fisher--Hartwig \\ determinant }}
	};
	\node[entity, rounded corners] (discrete) [left = of unit] {Finite set}
	child{node[attribute] (Meix) {Meixner}
	child{node[entity, align=center] {Hahn \\ polynomials}}
	};
	\node[entity, rounded corners]  (stereo) [right = of unit] {Real line}
	child{node[attribute] (Cauchy) {Cauchy}
	child{node[entity, align=center] {Romanovski \\ polynomials}}
	};
	\path (Meix) edge node  {Cauchy id.}  (Ew);
	\path (Ew) edge   node {Stereo. proj.} (Cauchy);
\end{tikzpicture}
\caption{Relationships between matrix ensembles: the supports (above), the weight functions (middle) and the corresponding tools providing an exact solution at $t=1$ (below).}
\label{fig:diagt1}
\end{figure}

\subsubsection{Asymmetric Toeplitz matrix}

A much studied symbol in spectral analysis of Toeplitz matrices \cite{ekstrom2018eigenvalues,Bottcheretal} is%
\begin{equation}
\label{eq:symbolomono}
\sigma \left( z\right) =z^{s}(1+tz)^{\beta }(1+tz^{-1})^{\beta},
\end{equation}
for integer $s\in \mathbb{Z}$. As discussed in Section \ref{sec:unitexact}, this has the effect of shifting the Fourier
coefficients of the symbol by $s$, thus it shifts the diagonals of the
Toeplitz matrix upward (if $s>0$) or downward ($s<0$) by $s$ and the
Toeplitz banded matrix associated to our model becomes asymmetric, making
it more similar to the generalized model in terms of this asymmetry of the
associated Toeplitz matrix. The stereographic projection of this extra term
is: 
\begin{equation*}
e^{\ii s\varphi }=\left( \frac{1+\ii x}{1-\ii x}\right) ^{s}=\exp \left(
s\log \frac{1+\ii x}{1-\ii x}\right) =e^{2s\arctan x}.
\end{equation*}%
For $t=1$, the resulting weight function is
\begin{equation}
\label{eq:shiftsymbEstereo}
    \sigma (x) = (1- \ii x)^{-K - \beta -s} (1+ \ii x)^{-K - \beta +s} .
\end{equation}
This is the stereographic projection of the weight introduced in eq. \eqref{unitaryshiftsymb}, and it is still of the Romanovski form (see \cite{Koepf}) with the identification%
\begin{equation*}
\alpha _{1}=K+\beta +s,\quad \alpha _{2}=K+\beta -s.
\end{equation*}
We can therefore give explicit evaluation of the determinant of the Toeplitz banded matrix with symbol \eqref{eq:shiftsymbEstereo} using again Romanovksi polynomials: 
\begin{align}
\det T_{K} \left( \sigma \left( x \right) \right) &=\left(\frac{K!}{2^{K^2-2\beta K}}  \right) G(K+1) \frac{ G(\beta + s +1) G(\beta - s +1)  }{ G(K +\beta + s +1) G(K+ \beta - s +1) } \frac{G (K+ 2 \beta +1) }{G (2 \beta +1)} \label{detshiftedstereo}.
\end{align}
Note that, if we consider the random matrix ensemble with weight \eqref{eq:shiftsymbEstereo} and keep the normalization such that it coincides with the stereographic projection of the unitary ensemble with weight \eqref{unitaryshiftsymb}, discussed in Section \ref{sec:unitexact}, the factor in bracket in \eqref{detshiftedstereo} cancels exactly.

\subsection{Large $K$ limit}
\label{sec:largeKromanov}

We now focus on $\mathcal{Z}_{E,\mathrm{stereo.}} ^{\mathrm{sym.}}$, in eq. %
\eqref{eq:stereosymE}, and study its large $K$ limit, with $\gamma= \beta/K$
fixed. The saddle point equation is: 
\begin{equation}  \label{eq:stereoSPEsym}
\mathrm{P} \int \dd y \frac{ \rho (y) }{x-y} = (1+ \gamma) \frac{x}{1+x^2} -
\gamma \frac{ (1-t)^2 x }{ (1+t)^2 + x^2 (1-t)^2 } .
\end{equation}
The parity symmetry of the matrix model guarantees that we can look for a
symmetric solution with $\mathrm{supp} \rho = [-A, A]$, $A>0$. We expect
that $A$ will be back-projected to $e^{\ii \phi_0}$ of Section \ref{sec:Baiksym} undoing the stereographic projection.

We report the details of the
computations in Appendix \ref{app:ZEstereosym}. We arrive to the
eigenvalue density: 
\begin{equation*}
\rho (x) = \frac{ \sqrt{A^2 - x^2 }}{\pi} \left[ \frac{ (1+ \gamma ) }{ 
\sqrt{ A^2 + 1 } (x^2 + 1) } - \frac{ \gamma t_0 }{ \sqrt{ A^2 + t_0 ^2 }
(x^2 + t_0 ^2 ) } \right] ,
\end{equation*}
with the boundary $A$ fixed by normalization: 
\begin{equation*}
A^2 = \frac{ (2 \gamma + 1 )(1+t)^2 }{ (2 \gamma + 1 - t )( 2 \gamma - 1 +
t) } .
\end{equation*}
This solution is well defined as long as 
\begin{equation*}
\gamma > \frac{1-t}{2t} ,
\end{equation*}
and sending $A \to \infty$, which would be back-projected to $\phi_0 \to \pi$%
, corresponds to the limit $\gamma \downarrow \gamma_{c,E} = \frac{1-t}{2t} $%
. This matches the analysis on the circle.

We can do the same with the more general matrix model $\mathcal{Z}_{E,%
\mathrm{stereo.}}^{\mathrm{gen.}}$ of eq. \eqref{eq:stereogenE}. In this
case, as happened on the circle, turning on the asymmetry parameter $v$
complexifies the eigenvalue density, which becomes: 
\begin{equation*}
\rho _{\mathrm{stereo.}}(x)=\frac{\sqrt{A^{2}-x^{2}}}{\pi }\left[ \frac{%
(1+\gamma )+\ii v\gamma x}{\sqrt{A^{2}+1}(x^{2}+1)}-\frac{\gamma t_{0}+\ii %
v\gamma x}{\sqrt{A^{2}+t_{0}^{2}}(x^{2}+t_{0}^{2})}\right] ,
\end{equation*}%
with same value of $A$ as above. The derivation of the result is given in
Appendix \ref{app:ZEstereogen}. For completeness, we also report the details of the large $K$ analysis of the matrix model with weight \eqref{eq:symbolomono}, corresponding to the determinant of an asymmetric Toeplitz matrix, in Appendix \ref{app:stereo+mono}.

\subsection{Orthogonal and symplectic symmetries and Wronskians of Chebyshev polynomials}
\label{sec:Wronski}

The other option, to pass from the unit circle to the real line, is to use $%
x=\cos \theta $. This works best for matrix integrals where the Haar
measure is that of $O(2K)$, $Sp(2K)$ or $O(2K+1)$. That is, for
Toeplitz$\pm$Hankel determinants. In that case the result  \cite[Lemma 2.7]{DIK} holds, connecting Toeplitz$\pm$Hankel and Hankel determinants.
\begin{lemma}[\cite{DIK}]
\label{lema27DIK}Let $\sigma_{j}$ be the Fourier coefficient of the symbol $\sigma (z)$, $\sigma_{j}=%
\frac{1}{2\pi }\int_{- \pi}^{\pi } \sigma (e^{\ii \theta })e^{-\ii j \theta }d\theta $, with\footnote{The symmetric symbols of $E$ and $H$ type satisfy the assumption, but the non-symmetric symbols do not.} $\sigma(e^{\ii \theta })=\sigma(e^{-\ii \theta })$. Then,
\begin{eqnarray*}
\frac{1}{2} \det {\left( \sigma_{j-k}+\sigma_{j+k-2}\right) _{j,k=1}^{K}} &{=}&\frac{2^{K^{2}-K+1}}{%
(2\pi) ^{K}}D_{K}\left( \sigma (e^{\ii \theta \left( x\right) })/\sqrt{1-x^{2}} \right) \\
\det {\left( \sigma_{j-k}-\sigma_{j+k}\right) _{j,k=1}^{K}} &{=}&\frac{2^{K^{2}+K}}{%
(2\pi) ^{K}}D_{K}\left( \sigma (e^{\ii \theta \left( x\right) })\sqrt{1-x^{2}} \right) \\
\det {\left( \sigma_{j-k}+\sigma_{j+k-1}\right) _{j,k=1}^{K}} &{=}&\frac{2^{K^{2}}}{%
(2\pi) ^{K}}D_{K} \left( \sigma (e^{\ii \theta \left( x\right) })\sqrt{\frac{1+x}{1-x}} \right) \\
\det {\left( \sigma _{j-k}- \sigma _{j+k-1}\right) _{j,k=1}^{K}} &{=}&\frac{2^{K^{2}}}{%
(2\pi) ^{K}}D_{K} \left( \sigma (e^{\ii \theta \left( x\right) })\sqrt{\frac{1-x}{1+x}} \right)
\end{eqnarray*}%
where $D_{K}( \cdot )$ is the Hankel determinant on $[-1,1]$, with $x=\cos \theta $. 
\end{lemma}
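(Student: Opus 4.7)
The plan is to reduce each of the four Toeplitz$\pm$Hankel determinants to a Hankel determinant on $[-1,1]$ by the classical trick of diagonalizing the Fourier-to-Chebyshev transform. The assumption $\sigma(e^{\ii\theta})=\sigma(e^{-\ii\theta})$ gives $\sigma_j=\sigma_{-j}$, hence
\[
\sigma_n=\frac{1}{\pi}\int_{0}^{\pi}\sigma(e^{\ii\theta})\cos(n\theta)\,\dd\theta.
\]
Then the four sign-and-shift patterns $\sigma_{j-k}\pm\sigma_{j+k+a}$, with $a\in\{-2,0,-1\}$, can be collapsed into a single integral of a product of two ``Chebyshev angle functions'' using the product-to-sum identities $2\cos A\cos B=\cos(A-B)+\cos(A+B)$ and $2\sin A\sin B=\cos(A-B)-\cos(A+B)$. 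The four kinds of Chebyshev polynomials $T_n,U_n,V_n,W_n$ then appear naturally, one per case.

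First I would implement this reduction concretely: choosing $(A,B)$ equal to $((j{-}1)\theta,(k{-}1)\theta)$, $(j\theta,k\theta)$, $((j{-}\tfrac12)\theta,(k{-}\tfrac12)\theta)$ in the two sign variants yields, for each of the four identities, an expression of the form
\[
\sigma_{j-k}\pm\sigma_{j+k+a}=\frac{2}{\pi}\int_{0}^{\pi}\sigma(e^{\ii\theta})\,\phi_{j-1}(\theta)\phi_{k-1}(\theta)\,\dd\theta,
\]
with $\phi_n\in\{\cos(n\theta),\sin((n{+}1)\theta),\cos((n{+}\tfrac12)\theta),\sin((n{+}\tfrac12)\theta)\}$. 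Now apply the elementary identities $\cos(n\theta)=T_n(\cos\theta)$, $\sin((n{+}1)\theta)=\sin\theta\,U_n(\cos\theta)$, $\cos((n{+}\tfrac12)\theta)=\cos(\theta/2)V_n(\cos\theta)$, $\sin((n{+}\tfrac12)\theta)=\sin(\theta/2)W_n(\cos\theta)$, and change variables to $x=\cos\theta$ with $\dd\theta=\dd x/\sqrt{1-x^2}$. The boundary factors $\sin^{2}\theta=1-x^{2}$, $\cos^{2}(\theta/2)=(1+x)/2$, $\sin^{2}(\theta/2)=(1-x)/2$ combine with the Jacobian $1/\sqrt{1-x^{2}}$ to produce, respectively, the four weight functions $1/\sqrt{1-x^{2}}$, $\sqrt{1-x^{2}}$, $\sqrt{(1+x)/(1-x)}$, $\sqrt{(1-x)/(1+x)}$ stated in the lemma.

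The third step is to factor the matrix of these entries as $P\,M\,P^{\top}$, where $P$ is the upper-triangular matrix whose row $m$ contains the coefficients of the Chebyshev polynomial $P_m$ in the monomial basis, and $M_{n,m}=\int_{-1}^{1}x^{n+m}\sigma(e^{\ii\theta(x)})w(x)\,\dd x$ is the Hankel moment matrix whose determinant is by definition $D_K(\sigma(e^{\ii\theta(x)})w(x))$. Then $\det(\text{matrix})=(\det P)^{2}D_K$, and $\det P$ is the product of the leading coefficients of $P_0,\dots,P_{K-1}$: explicitly $\det P=2^{(K-1)(K-2)/2}$ in the $T$-case (since $T_n$ has leading coefficient $2^{n-1}$ for $n\ge1$ and $T_0=1$), and $\det P=2^{K(K-1)/2}$ in the $U$-, $V$- and $W$-cases (leading coefficient $2^{n}$). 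Multiplying by $(2/\pi)^K$ in the $T,U$-cases, and by $(1/\pi)^K$ in the $V,W$-cases (where the $\tfrac12$ in $(1\pm x)/2$ has been absorbed), one recovers the stated prefactors $2^{K^{2}-K+1}/(2\pi)^{K}$, $2^{K^{2}+K}/(2\pi)^{K}$ and $2^{K^{2}}/(2\pi)^{K}$.

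The main technical obstacle is the careful bookkeeping of $2$-powers. The exponent of $2$ on the right-hand sides comes from the interplay of three sources: the $K$-fold prefactor $2/\pi$ or $1/\pi$ from the Fourier-to-Chebyshev reduction, the square of the product of Chebyshev leading coefficients, and (in the $V,W$ cases) the factor $2^{-K}$ extracted from the weights $(1\pm x)/2$. A secondary subtlety is the ``$+1$'' in the exponent $2^{K^{2}-K+1}$ of the first identity, together with the extra factor $\tfrac12$ on the left: this asymmetry is dictated by the index shift ``$j+k-2$'' (instead of $j+k$), which forces the use of $T_{j-1},T_{k-1}$ at one degree lower than the other cases and is the origin of the exponent $(K{-}1)(K{-}2)/2$ in $\det P$ (rather than $K(K{-}1)/2$), together with the factor $2$ accounting for the diagonal contribution when the first row and column of the Hankel-shifted sum degenerate. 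With these bookkeeping checks in place, the four identities of the lemma follow simultaneously.
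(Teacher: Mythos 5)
Your proof is correct, but it follows a genuinely different route from the paper's. The paper proves the lemma by interpreting both sides as Haar-measure integrals over the classical groups: the change of variables $x_j=\cos\theta_j$ turns the group integral into a $[-1,1]^K$ ensemble with Jacobi weight $(1+x)^a(1-x)^b$, $a,b\in\{\pm\tfrac12\}$, which Andr\'eief's identity converts into the Hankel determinant $D_K$, while the Toeplitz$\pm$Hankel determinants on the left are identified with the same group integrals via the Baik--Rains/GGT correspondence (including the $\sigma(U)\mapsto\sigma(-U)$ trick for the two odd-rank cases). You instead work entry-by-entry: the evenness of $\sigma$ plus the product-to-sum identities collapse $\sigma_{j-k}\pm\sigma_{j+k+a}$ into a single weighted inner product of Chebyshev-type functions, and the matrix factors as $PMP^{\top}$ with $M$ the Hankel moment matrix and $\det P$ the product of Chebyshev leading coefficients. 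I checked your power-of-two bookkeeping in all four cases and it is consistent: e.g.\ in the $T$-case $(2/\pi)^K\cdot 2^{(K-1)(K-2)}=2^{K^2-2K+2}/\pi^K$, whose half matches $2^{K^2-K+1}/(2\pi)^K$, and the $V,W$ cases correctly trade the $2^{-K}$ from $(1\pm x)/2$ against the full $2^{K(K-1)}$. Your approach is self-contained and elementary (essentially the original Deift--Its--Krasovsky computation), and it makes the prefactors transparent; the paper's approach leans on cited group-integral identities but buys the random-matrix interpretation of each determinant as an $O^{\pm}$ or $Sp$ average, which is what the rest of Section 3.3 actually uses. One small expository quibble: the extra $\tfrac12$ and the ``$+1$'' in the first identity are fully explained by $T_0,\dots,T_{K-1}$ having leading coefficients $1,1,2,\dots,2^{K-2}$ rather than $2^{n}$ --- there is no degeneration of the first row and column, since $2\cos((j-1)\theta)\cos((k-1)\theta)$ already reproduces $2\sigma_0$ at $j=k=1$.
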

We give a new proof of this result following \cite{GGT}. First, we set the notation for the rest of the Section. Denote $O^{\pm } (N)$ the real orthogonal matrices with determinant $ \pm 1$, and
\begin{equation*}
    G (K) \in \left\{ O^{+}(2K), Sp(2K),  O^{+} (2K+1), O^{-}(2K+1) \right\} .
\end{equation*}
Note that the Haar measures on $Sp (2K)$ and $O^{-} (2K+2)$ coincide. Let us consider matrix integrals with weight $\sigma (e^{\ii \theta})$ over the group $G(K)$ (see \cite[Sec. 2.6]{Forrester} for details on the Haar measure on orthogonal and symplectic groups), which we denote by $\mathcal{Z}_{\sigma} ^{G(K)}$.
\begin{proof}[Proof of Lemma \ref{lema27DIK}]
    The change of variable $x_j = \cos \theta_j$ gives
    \begin{equation}
        \mathcal{Z}_{\sigma} ^{G(K)} = \frac{ \mathcal{N} _{G(K)} }{K! (2 \pi)^K} \int_{\left[ -1,1\right] ^{K}}%
\prod_{j=1}^{K} \sigma (e^{\ii \theta (x)}) \left( 1+x_{j}\right) ^{a}\left( 1-x_{j}\right) ^{b}\dd x_{j}\prod_{1\leq j<k\leq
K}\left( x_{j}-x_{k}\right) ^{2} \label{ZsigmaGK}
    \end{equation}
    with $a, b \in \left\{ \pm 1/2 \right\} $ depending on the symmetry of the original model,
\begin{equation}
	\begin{tabular}[bht]{l|c | c | c | c }
	    \ & $O^{+} (2K)$ &  $Sp (2K)$ & $O^{+} (2K+1) $  & $O^{-} (2K+1) $ \\
	    \hline
	    $a$ & $-\frac{1}{2}$ & $ \frac{1}{2}$ & $-\frac{1}{2}$ & $ \ $ $ \frac{1}{2}$ \\
	    $b$ & $-\frac{1}{2}$ & $\frac{1}{2}$ & $ \ $ $ \frac{1}{2}$ & $-\frac{1}{2}$
	\end{tabular}
\label{abZOSp}
\end{equation}
    and the overall normalization depends on the specific $G(K)$:
\begin{equation*}
    \mathcal{N} _{O^{+} (2K) } = 2^{K^2 -K +1} , \quad  \mathcal{N} _{Sp (2K) } = 2^{K^2 +K } , \quad  \mathcal{N} _{O^{\pm} (2K+1) } = 2^{K^2 }  .
\end{equation*}
    Using Andr\'{e}ief's identity \cite{Andreief,MeetA}, expression \eqref{ZsigmaGK} gives the four Hankel determinants on the right hand sides in the Lemma, according to the choice of $G(K)$. The Toeplitz$\pm$Hankel determinants on the left hand sides in the Lemma correspond with matrix integration over orthogonal and symplectic group \cite{BaikRains1,GGT}. Indeed, for $\sigma (e^{\ii \theta}) = \sigma (e^{-\ii \theta}) $
\begin{align*}
\int_{O^{+} (2N)} \sigma (U)dU & =\frac{1}{2}\det {\left( \sigma_{j-k}+\sigma_{j+k-2}\right)
_{j,k=1}^{N}}, \\
\int_{Sp(2N)} \sigma (U)dU & =\det {\left( \sigma_{j-k}-\sigma_{j+k}\right) _{j,k=1}^{N}}.
\end{align*}
Furthermore, as noted in \cite{GGT},
\begin{equation*}
    \int_{G(K)} \sigma (U)dU= \int_{G(K)}\sigma (-U)dU , \qquad \text{  for  } G(K) \in \left\{ U(K),O^{+}(2K), Sp (2K) \right\} ,
\end{equation*}
but then we have 
\begin{align*}
\int_{O^{+}(2K+1)} \sigma (U) dU=\det {\left( \sigma_{j-k}-\sigma_{j+k-1}\right) _{j,k=1}^{K}}, \\
\int_{O^{+}(2K+1)} \sigma (-U)dU= \int_{O^{-}(2K+1)} \sigma (U)dU=\det {\left( \sigma_{j-k}+\sigma_{j+k-1}\right) _{j,k=1}^{K}} .
\end{align*}
This concludes the proof of the four identities.
\end{proof}

For the case of the symmetric $E$ and $H$ symbols considered in the paper, we have expressions of the
general form:
\begin{equation}
\mathcal{Z}_{E/ H} ^{G(K)} = \frac{ \mathcal{N}_{G(K)} }{K! (2 \pi)^K} ( 2t)^{\pm \beta K }\int_{\left[ -1,1\right] ^{K}}%
\prod_{j=1}^{K}\left( \xi \pm x_{j}\right) ^{\pm \beta }\left(
1+x_{j}\right) ^{a}\left( 1-x_{j}\right) ^{b}\dd x_{j}\prod_{1\leq j<k\leq
K}\left( x_{j}-x_{k}\right) ^{2},  \label{eq:EorHOSp}
\end{equation}
with $\xi =\left( 1+t^{2}\right) /2t$ and $a,b$ as in \eqref{abZOSp}. The $\pm$ signs are always the upper sign for the $E$-model and the lower for the $H$-model. Moreover, we take $\beta=N \in \mathbb{Z}_{>0}$ a positive integer.
Thus, $\mathcal{Z}_{E/H} ^{G(K)}$ are averages of moments of characteristic polynomials, taken in a random matrix ensemble where the weight function is of the
Jacobi type, $\omega \left( x\right) =\left( 1+x\right) ^{a}$ $\left(
1-x\right) ^{b}$, but always with the specific values of $a$ and $b$ above,
corresponding to the Chebyshev subfamily. Hence, the corresponding orthogonal
polynomials for 
\begin{equation*}
a=b=-\frac{1}{2}, \qquad  a=b=\frac{1}{2}, \qquad a=-b=\frac{1}{2}, \qquad a=-b=-\frac{1}{2}
\end{equation*}
are
Chebyshev polynomials of first, second, third and fourth kind respectively, denoted by 
\begin{equation*}
T_{n}\left( x\right) , \quad \qquad \quad U_{n}(x), \quad \qquad \quad V_{n}(x), \quad \qquad \quad W_{n}(x).
\end{equation*}
Their correspondence with group integration is \eqref{abZOSp}. As mentioned, instead of interpreting the resulting model \eqref{eq:EorHOSp} as the
partition function of a model with a semi-classical weight function $%
\widetilde{\omega }\left( x \right) =\left( \xi- x_{j}\right) ^{\pm \beta
}\left( 1+x\right) ^{a}$ $\left( 1-x\right) ^{b}$ (which in itself is an
interesting possibility and may lead to connections with Painlev\'{e}
equations and other integrable models), we adopt the characteristic polynomial
point of view \cite{BrezinHikami,BDStrahov}, which, for the $+$ sign case in \eqref{eq:EorHOSp} (below we discuss the other choice), gives that%
\begin{equation*}
\frac{ \mathcal{Z}_{E}^{G(K)} (\beta = N ) }{ \mathcal{Z}_{E}^{G(K)}\left( \beta =0\right) }= (-2t)^{N K} C_{E} (N) \ \mathrm{Wr}\left[ P_{K
}(- \xi ),...,P_{K+N-1}( - \xi )\right] ,
\end{equation*}%
where $\mathrm{Wr}\left[ f_{n}(x),...,f_{n+l-1}(x)\right] \equiv \det \left(
f_{n+k-1}^{(j-1)}\left( x\right) \right) _{j,k=1}^{l}$ denotes the Wronskian
determinant and $P_{n}(\cdot )$ denotes one of the four Chebyshev polynomials. We also have defined the coefficients
\begin{equation*}
    C_{E} (N) = \frac{ (-1)^{\frac{N (N-1)}{2}} }{G(N+1)} , \qquad C_H (N) = \left(\frac{4}{\pi}\right)^N \frac{1}{G(N+1)}.
\end{equation*}
($C_H$ is for later use). The result follows immediately from Lemma \ref{lema27DIK} above (\emph{i.e.} \cite[Lemma 2.7]{DIK}) together with the known characteristic polynomial results \cite{BrezinHikami}. It has
also been the specific subject of a later paper \cite{Elouafi1}, where the same result is obtained, for a polynomial and, as above, even, symbol. Therefore it holds
that:

\begin{eqnarray*}
\left( \frac{1}{(-2t)^{N K} C_{E} (N) } \right)\ \frac{ \mathcal{Z} _{E}^{O^{+}(2K)} (\beta =N )  }{ \mathcal{Z}_{E}^{O^{+}(2K)}\left( \beta =0\right) } &=&  \mathrm{Wr}\left[
T_{K }( - \xi ),...,T_{K+N-1}( - \xi )\right] , \\
\left( \frac{1}{(-2t)^{N K} C_{E} (N) } \right) \ \frac{ \mathcal{Z} _{E}^{Sp(2K)} (\beta =N )  }{ \mathcal{Z}_{E}^{Sp(2K)}\left( \beta =0\right) } &=& \mathrm{Wr} \left[
U_{K }( - \xi ),...,U_{K+N-1}( - \xi )\right] , \\
\left( \frac{1}{(-2t)^{N K} C_{E} (N) } \right) \  \frac{\mathcal{Z}_{E}^{O^{-}(2K+1)} (\beta=N )  }{\mathcal{Z}_{E}^{O^{-}(2K+1)}\left( \beta
=0\right) } &=& \mathrm{Wr} \left[
V_{K }( - \xi ),...,V_{K+N-1}( - \xi )\right] , \\
\left( \frac{1}{(-2t)^{N K} C_{E} (N) }\right) \ \frac{ \mathcal{Z}_{E}^{O^{+} (2K+1)} (\beta =N ) }{ \mathcal{Z}_{E}^{O^{+}(2K+1)}\left( \beta =0\right) } &=& \mathrm{Wr} \left[ W_{K }( - \xi ),...,W_{ K+N-1}( - \xi ) \right] .
\end{eqnarray*}%

In addition, we can also study the case with symbol $H$ in the same manner.
For negative moments the characteristic polynomial
description becomes involved, having to consider not the polynomials
orthogonal w.r.t. the given weight, but the Cauchy transform of
these polynomials \cite{BDStrahov}. In some cases this is a
considerable complication, however, for Chebyshev polynomials, it is well known that the polynomials of first and second kind (and likewise the polynomials of third and fourth kind) are integral transforms of each other with respect
to weighted Hilbert kernels \cite{Mason}:%
\begin{align*}
 \mathrm{P} \int_{-1}^{1}\frac{T_{n}(y) \mathrm{d} y}{(x-y)\sqrt{1-y^{2}}} &= \pi U_{n-1}(x)  \ \text{ and }  \mathrm{P}\int_{-1}^{1}\frac{U_{n}(y)\sqrt{1-y^{2}}\mathrm{d}y}{(x-y)} = -\pi T_{n+1}(y), \\
 \mathrm{P} \int_{-1}^{1}\frac{V_{n}(y)}{(x-y)}\sqrt{\frac{1+y}{1-y}} \mathrm{d} y &= \pi W_{n}(x)  \quad \text{ and }  \mathrm{P} \int_{-1}^{1}\frac{W_{n}(y)}{(x-y)}\sqrt{\frac{1-y}{1+y}} \mathrm{d} y =  -\pi
V_{n}(x),
\end{align*}%
with $\mathrm{P} \int$ the Cauchy principal value integral. The weighting is precisely the corresponding weight function, which is how it is required \cite{BDStrahov}. Note also that $\xi >1$ for $0<t<1$, thus $\pm \xi$ falls outside the integration region: in this case the Hilbert and Cauchy transform simply differ by a factor $\sqrt{-1} /2$. Therefore, when $0\le N <K$ we have:  
\begin{eqnarray*}
\left( \frac{ (2t)^{N K } }{C_H (N)} \right) \ \frac{ \mathcal{Z}_{H}^{O^{+}(2K)} (\beta =N )  }{  \mathcal{Z}_{H}^{O^{+}(2K)}\left( \beta =0\right) } &=&  (-1)^{N} \ \mathrm{Wr} \left[
U_{K-N -1}(\xi ),...,U_{K-2}(\xi )\right] , \\
\left( \frac{ (2t)^{N K } }{C_H (N)}\right) \  \frac{\mathcal{Z}_{H}^{Sp(2K)} (\beta=N )  }{ \mathcal{Z}_{H}^{Sp(2K)}\left( \beta =0\right) } &=&  \mathrm{Wr} \left[
T_{K-N +1}(\xi ),...,T_{K}(\xi )\right] , \\
\left( \frac{ (2t)^{N K } }{C_H (N)} \right) \ \frac{\mathcal{Z}_{H}^{O^{-} (2K+1)} (\beta=N )  }{\mathcal{Z} _{H}^{O^{-} (2K+1)}\left( \beta
=0\right) } &=& (-1)^N  \ \mathrm{Wr} \left[
W_{K-N }(\xi ),...,W_{K-1}(\xi )\right] , \\
\left( \frac{ (2t)^{N K } }{C_H (N)} \right) \ \frac{ \mathcal{Z}_{H}^{O^{+} (2K+1)} (\beta=N ) }{ \mathcal{Z}_{H}^{O^{+} (2K+1)}\left( \beta =0\right) } &=&   \mathrm{Wr} \left[ V_{K-N }(\xi ),...,V_{K-1}(\xi ) \right] .
\end{eqnarray*}%
Thus, we obtain a duality between the $E$ model with $O^{-}(2K+1)$ and the $H$ model with $O^{+} (2K+1)$ and conversely. The parameters are mapped $K \mapsto K-N$, $\xi \to -\xi$ under the duality. We also obtain a tight relation between the $E$ model with $O^{+} (2K) $ and the $H$ model with $Sp (2K)$ and conversely, although it is not really a duality in the sense that we cannot recover one result from the other via a bijective map of the parameters.

So far, we have discussed the same models as above but with a symplectic or
orthogonal symmetry (it would be interesting if such model had an
interpretation in terms of a dimer problem, perhaps with boundaries, but we
do not know of such an interpretation). The unitary case follows from the
identity \cite{VeinDale}:%
\begin{eqnarray*}
\det T_{2K+1}( \sigma ) &=&\frac{1}{2}\det {\left( \sigma_{j-k}- \sigma_{j+k}\right)
_{j,k=1}^{K}}\det {\left( \sigma_{j-k}+ \sigma_{j+k-2}\right) _{j,k=1}^{K+1} } , \\
\det T_{2K}(\sigma ) &=&\det {\left( \sigma _{j-k}+ \sigma_{j+k-1}\right) _{j,k=1}^{K}}\det {
\left( \sigma_{j-k}-\sigma_{j+k-1}\right) _{j,k=1}^{K}} ,
\end{eqnarray*}
which is used in matrix integral form in \cite{GGT} and was used in \cite{Elouafi2} to precisely extend the Wronskian result \cite{Elouafi1} (again for a symbol of the $E$ type) to the unitary case. Therefore, from this
Wronskian point of view, the unitary matrix model is more complicated than
the orthogonal and symplectic ones, because it is the product of two Wronskians. The formulas for the symmetric $E$ symbol are\footnote{Note that in the odd case the two Wronskians correspond to $Sp (2K)$ and $O^{+} (2K +2)$, not $O^{+} (2K)$.}:
\begin{align*}
   \widetilde{C}_{E}^{U(2K+1)} (N) \ \frac{ \mathcal{Z}_E ^{U(2K+1)} (\beta=N )  }{ \mathcal{Z}_E ^{U(2K+1)} (\beta=0)} & = \mathrm{Wr}\left[
T_{K +1}( - \xi ),...,T_{K+N}( - \xi )\right] \cdot \mathrm{Wr}\left[
U_{K }( - \xi ),...,U_{K+N-1}( - \xi )\right] , \\ 
\widetilde{C}_{E}^{U(2K)} (N) \ \frac{ \mathcal{Z}_E ^{U(2K)} (\beta =N ) }{ \mathcal{Z}_E ^{U(2K)} (\beta=0)} & = \mathrm{Wr}\left[
V_{K }( - \xi ),...,V_{K+N-1}( - \xi )\right] \cdot \mathrm{Wr}\left[
W_{K }( - \xi ),...,W_{K+N-1}( - \xi )\right] ,
\end{align*}
and for the $H$ symbol
\begin{align*}
 \widetilde{C}_{H} ^{U(2K+1)} (N) \ \frac{ \mathcal{Z}_H ^{U(2K+1)} (\beta =N)  }{ \mathcal{Z}_H ^{U(2K+1)} (\beta=0)}  & = (-1)^N \ \mathrm{Wr}\left[
T_{K-N +1}( \xi ),...,T_{K}( \xi )\right] \cdot \mathrm{Wr}\left[
U_{K-N -1 }( \xi ),...,U_{K-2}( \xi )\right] , \\ 
\widetilde{C}_{H} ^{U(2K)} (N) \ \frac{ \mathcal{Z}_H ^{U(2K)} (\beta =N ) }{ \mathcal{Z}_H ^{U(2K)} (\beta=0)}  & = (-1)^N \ \mathrm{Wr}\left[
V_{K-N }( \xi ),...,V_{K-1}( \xi )\right] \cdot \mathrm{Wr}\left[
W_{K-N }( \xi ),...,W_{K-1}( \xi )\right] ,
\end{align*}
with coefficients
\begin{equation*}
    \widetilde{C}_{E} ^{U(M)} (N) = \frac{G(N+1)^2 }{(-2t)^{N M} } , \quad \widetilde{C}_{H} ^{U(M)} (N) =  G(N+1)^2 (2t)^{N M} \left( \frac{\pi}{4} \right)^{2N} , \qquad M \in \left\{ 2K , 2K+1 \right\} .
\end{equation*}
We obtain a duality for $U(2K)$ with $E$ and $H$ symbol, with parameters mapped according to $\xi \mapsto - \xi$ and $K \mapsto K-N$ together with the factor $(-1)^N$ in the $H$-model. The relation between the model with $E$ and $H$ symbol and group $U(2K+1)$ is not a duality.

It is worth mentioning that Wronskians of orthogonal polynomials in general
satisfy an interesting identity which expresses them as  Hankel determinants,
but with the roles of the $K$ and $N$ swapped \cite{Lec}. In the case
of the Chebyshev Wronskian of the first type, the Hankel determinant is in
terms of Legendre polynomials and for the second type one is in terms of a
Hankel of Gegenbauer polynomials $C^{(3/2)} _n (x)$ (polynomials orthogonal
w.r.t. $ \omega \left( x\right) =(1-x^{2})$).

\subsubsection{Large $K$ limit}
The asymptotics of a Hankel determinant with symbol $ f(x) (1+x)^{a} (1-x)^{b}$, $x \in [-1,1]$ with $f(x) >0$ and $f^{\prime} (x)$ Lipschitz continuous, has been obtained by Basor and Chen \cite{BasorChen} when $a \ge 0, b\ge 0$. They also showed that the resulting expression can be analytically continued to all $a\ge -\frac{1}{2}, b \ge -\frac{1}{2}$, and therefore it applies to the matrix ensemble \eqref{eq:EorHOSp} for each of the four symmetry groups. 
In that case, $a,b \in \left\{ + \frac{1}{2}, - \frac{1}{2} \right\}$ and $f(x) = (\xi \pm x)^{\pm \beta}$, with upper sign for the $E$ model and lower sign for the $H$ model. Recall also that $\xi = (1+t^2)/(2t)>1$, hence $f(x)$ is admissible in both models. The formula of \cite{BasorChen} for generic $a,b$ and $f$ is:
\begin{equation}
    D_K \left( f(x) (1+x)^{a} (1-x)^{b} \right) \approx  \frac{(2 \pi)^{K} K ^{\frac{a^2 + b^2}{2} - \frac{1}{4} } }{2^{K (K+ a +b)}} \exp \left( \frac{K}{\pi} \int_{-1} ^{1} \frac{ \log f(x) }{ \sqrt{1-x^2}} \mathrm{d} x  \right) Z (f) ,
\label{eq:BCformula}
\end{equation}
for large $K$, where $Z (f)$ is independent of $K$ and is given by:
\begin{align*}
    Z (f) & = \exp \left[ \int_{-1} ^{1} \frac{ \log f(x) }{ \sqrt{1-x^2}} \left( \mathrm{P} \int_{-1} ^{1} \frac{ f^{\prime} (y) \sqrt{ 1-y^2}}{ f(y) (y-x) } \frac{ \mathrm{d} y }{2 \pi} \right)\frac{ \mathrm{d} x }{2 \pi} \right] \\
    & \times \exp \left( \frac{a+b}{2\pi} \int_{-1} ^{1} \frac{ \log f(x) }{ \sqrt{1-x^2}} \mathrm{d} x  \right) \frac{ G \left( \frac{ a+b+1}{2} \right)^2 G \left( \frac{ a+b+2}{2} \right)^2 }{ G \left(  a+b+1 \right)  G \left(  a+1 \right)  G \left(  b+1 \right) } \Gamma \left( \frac{ a+b+1}{2} \right) .
\end{align*}
This expression directly gives the asymptotic behaviour of $\mathcal{Z}^{G(K)} _{E/H}$ for both symbols. 
First, we notice that the numerical prefactor in \eqref{eq:EorHOSp} simplifies with the one coming from formula \eqref{eq:BCformula}, exactly for $G(K)\in \left\{ Sp(2K), O^{\pm} (2K+1) \right\}$ and leaving behind a factor of $2$ when $G(K) = O^{+} (2K)$. 
Then, we use
\begin{equation*}
    \int_{-1} ^{1} \frac{ \log (\xi \pm x) }{ \sqrt{1-x^2}} \frac{ \mathrm{d} x }{ \pi }  = \log \left( \frac{ \xi + \sqrt{ \xi^2 -1}}{2} \right) = \log \frac{1}{2t} 
\end{equation*}
and see that the contribution from the exponential in \eqref{eq:BCformula} cancels:
\begin{equation*}
    (2t)^{\pm \beta K}  \exp \left( \pm \beta K    \int_{-1} ^{1} \frac{ \log (\xi \pm x) }{ \sqrt{1-x^2}} \frac{ \mathrm{d} x }{ \pi } \right) = 1 .
\end{equation*}

We are left with the contribution from $Z(f)$. The ratio of products of Barnes $G$-functions and the Gamma function is trivial for $G(K)\in \left\{ Sp(2K), O^{\pm} (2K+1) \right\}$, and is $\frac{1}{2}$ for $G(K) = O^{+} (2K)$, thus cancelling the left over factor from the numerical coefficient. The exponential of the single integral gives
\begin{equation*}
 \exp \left( \pm (a+b) \beta  \int_{-1} ^{1} \frac{ \log (\xi \pm x) }{ \sqrt{1-x^2}}  \frac{\mathrm{d} x}{2\pi}  \right) = (2 t )^{ \mp (a+b) \frac{ \beta}{2}}  ,
\end{equation*}
and in particular is trivial for odd rank symmetry, $G(K)=O^{\pm} (2K+1)$. The double integral which appears in the exponential in $Z(f)$ is universal for the four symmetries and takes the form
\begin{equation*}
   \beta^2 \int_{-1} ^{1} \frac{ \log (\xi \pm x) }{ \sqrt{1-x^2}}  \mathrm{P} \int_{-1} ^{1} \frac{  \sqrt{ 1-y^2}}{ (y-x) (y \pm \xi) } \frac{ \mathrm{d} y }{2 \pi} \frac{ \mathrm{d} x }{2 \pi} = \frac{ \beta^2}{4 \pi} \int_{-1} ^{1} \frac{ \log (\xi \pm x) }{ \sqrt{1-x^2}}  \left( -1 + \frac{ \sqrt{\xi^2 -1 }}{\xi \pm x} \right)  \mathrm{d} x .
\end{equation*}
The last integral admits an exact solution in terms of hypergeometric functions, albeit involved. Instead, we write $\xi$ explicitly as $(1+t^2)/2t$ and, using $x= \cos \theta$, we expand the logarithm:
\begin{align*}
    \log \left( \xi \pm x \right) & = - \log 2t + \log (1 \pm t e^{\mathrm{i} \theta} ) + \log (1 \pm t e^{-\mathrm{i} \theta} ) \\ & = - \log 2t - \sum_{n =1}^{\infty} \frac{ 2 (\mp t)^{n}}{n} \cos (n \theta ) =  - \log 2t - \sum_{n =1}^{\infty} \frac{ 2 (\mp t)^{n}}{n} T_n (x) .
\end{align*}
After integration the $n=0$ term cancels, while for $n\ge1$ we use the Hilbert transform property of the Chebyshev polynomials to arrive at the final expressions
\begin{align*}
    \mathcal{Z}^{O^{\pm} (2K+1)} _{E/H} & \approx \exp \left( \pm \frac{ \beta^2}{4} \left( 1 - t^2 \right) \sum_{n=0} ^{\infty} \frac{ (\mp t)^{n}}{n+1} U_{n} (\mp \xi)  \right) , \\
    \mathcal{Z}^{O^{+} (2K)} _{E/H} & \approx \exp \left( \pm \frac{ \beta^2}{4} \left( 1 - t^2 \right) \sum_{n=0} ^{\infty} \frac{ (\mp t)^{n}}{n+1} U_{n} (\mp \xi)  \pm \frac{\beta}{2} \log 2t \right) , \\
     \mathcal{Z}^{Sp (2K)} _{E/H} & \approx \exp \left( \pm \frac{ \beta^2}{4} \left( 1 - t^2 \right) \sum_{n=0} ^{\infty} \frac{ (\mp t)^{n}}{n+1} U_{n} (\mp \xi)  \mp \frac{\beta}{2} \log 2t \right) ,
\end{align*}
with upper sign for the $E$-model and lower sign for the $H$-model. Note that we cannot recast the series into a logarithm because $\xi >1$.\par
We obtained the large $K$ limit with fixed $\beta$, but we could also study the large $K$ and large $\beta$ limit with $\gamma = \beta/K$ fixed, as we did in Section \ref{sec:Baiksym} for the matrix models with unitary symmetry. This scaled limit was analyzed in \cite{ChenMekPT} (except for $O^{-} (2K+1)$, which is nonetheless easily found following their argument), and a third order phase transition from a gapless to a gapped phase was observed, in agreement with the results in the unitary setting.

\section{Outlook}

It would be interesting if semiclassical polynomials can be used to obtain further analytical results on the model with orthogonal and symplectic symmetries, using the Wronskian representation presented here. Notice that computing these Wronskians is equivalent to computing determinants of banded matrices of the Toeplitz$\pm$Hankel type, which arise for example as Laplacian or difference matrices of problems with boundaries \cite{strang2014functions,cunden2018free}.

Other analytical approaches could have been followed as well. For example, it is immediate to interpret \eqref{eq:stereosymE} as a characteristic polynomial average over a Cauchy ensemble\footnote{Note that because the arguments of the polynomials would be purely imaginary, the determinantal expressions would be in terms of Jacobi polynomials, for this model.}. It would be more interesting to use this approach for the case of the asymmetric symbol \eqref{eq:symbolomono}, which, when mapped on the real line, gives the full Cauchy--Romanovski weight. This could be studied in the context of asymmetrically banded Toeplitz matrices. 

At large $N$, it could be interesting to consider the recent results \cite{Charlier} on  correlators of characteristic polynomials for classical ensembles. 
If a confluent limit of these results on correlators can be taken, that would directly evaluate the Wronskian description of the orthogonal and symplectic ensembles discussed in Section \ref{sec:Wronski}. On the other hand, the asymptotic formulas in \cite{Charlier} already could be directly applied to a $q$-deformation of the $E$-model, such as the one studied in \cite[Sec. 4.3]{GGT1}.

It is worth mentioning that the Cauchy ensemble has interesting connections with the Laguerre ensemble and Painlev\'{e} equations \cite{winn,basor2019representation} that may be further exploited in our context. We could also focus further on the \emph{exact equivalences} part and see if other averages over the Meixner ensemble, such as the ones in \cite{cohen2019moments}, have a counterpart in terms of the $E$ and $H$ unitary random matrix ensembles. \par
\medskip

With regards to the phase structure, it seems also worth exploring if the phase transitions obtained, which are phase transitions of the determinant in the Toeplitz matrix representation of the $E$ and $H$ models, are related to any spectral change of the Toeplitz matrices, which in the case of the $E$-model is a banded matrix. The spectra of such matrices is still of very much interest and a rich subject \cite{ekstrom2018eigenvalues,Bottcheretal} and it could be a worthwhile direction to study the spectra in the natural scaling limit suggested by the random matrix representation, the one explored here. 

We conclude by mentioning that passing from a third order to a second order phase transition opens up the possibility of intriguing interpretations. According to \cite{Hanada:19}, the negative jump of the second derivative of the free energy when crossing the critical line is associated with the creation of a meta-stable state, and the second order phase transition corresponds to tunneling from the meta-stable vacuum to the stable one. This aspect may deserve further attention.

\subsection*{Acknowledgements}

The work of MT was partially supported by the Funda\c{c}\~{a}o para a Ci\^{e}ncia e a Tecnologia (FCT) 
through its program Investigador FCT IF2014, under contract 
IF/01767/2014. 
The work of LS is supported by the FCT through the doctoral grant SFRH/BD/129405/2017. The work is also supported by FCT Project PTDC/MAT-PUR/30234/2017.


\begin{appendix}

\section{Solution to saddle point equations on the unit circle}
\label{app:saddlecalc}

Here we present the calculations to solve the saddle point equations obtained in the scaled large size limit of the unitary matrix models discussed in Section \ref{sec:BaikMM}.

\subsection{Solution for $\ZUsymH$ at weak coupling}
	\label{app:ZUHweak}
	We start with the solution of the saddle point equation \eqref{SPEBaikH}, which we rewrite here for completeness:
	\begin{equation}
			\label{appeq:SPEBaikH}
				- \ii \gamma t \left[ \frac{ e^{\ii \varphi} }{1-t e^{\ii \varphi} } - \frac{ e^{- \ii \varphi} }{1-t e^{- \ii \varphi} } \right] = \mathrm{P} \int \dd \vartheta \rho_H (\vartheta) \cot \left( \frac{ \varphi - \vartheta }{2} \right) .
	\end{equation}
	The solution was originally obtained by Baik in \cite{Baik:2000}, and the present derivation via saddle point equations was discussed in \cite{ML:2019}.\par
	We make the ansatz $\mathrm{supp} \rho = [- \pi, \pi ]$. On the right hand side of \eqref{appeq:SPEBaikH} we use the expansion (for $\vartheta \ne \varphi$):
	\begin{equation*}
			\cot \left( \frac{ \varphi - \vartheta }{2} \right) = 2 \sum_{n=1} ^{\infty} \left[ \sin (n \varphi) \cos (n \vartheta) - \cos (n \varphi) \sin (n \vartheta ) \right],
	\end{equation*}
	thus the integration extracts the Fourier coefficients of the eigenvalue density:
	\begin{equation*}
		\mathrm{P} \int \dd \vartheta \rho (\vartheta) \cot \left( \frac{ \varphi - \vartheta }{2} \right) = 2 \pi \sum_{n=1} ^{\infty} \left[ a_n \sin (n \varphi) - b_n \cos (n \varphi) \right] ,
	\end{equation*}
	where
	\begin{equation*}
				a_n = \frac{1}{\pi} \int_{-\pi} ^{\pi} \dd \vartheta \rho_H (\vartheta ) \cos (n \vartheta) , \qquad b_n = \frac{1}{\pi} \int_{-\pi} ^{\pi} \dd \vartheta \rho_H (\vartheta ) \sin (n \vartheta) .
	\end{equation*}
	For the left hand side we use:
	\begin{equation*}
				- \ii \gamma t \left[ \frac{ e^{\ii \varphi} }{1-t e^{\ii \varphi} } - \frac{ e^{- \ii \varphi} }{1-t e^{- \ii \varphi} } \right] = - (\ii \gamma) \frac{ 2 \ii t \sin ( \varphi) }{1 + t^2 - 2t \cos (\varphi) } = 2 \gamma t \sin (\varphi) \sum_{n=0} ^{\infty} U_n ( \cos (\varphi)) t^n  ,
	\end{equation*}
	where in the second equality we recognized the generating function of Chebyshev polynomials of second kind, $U_n(x)$. Exploiting the basic property $U_n (\cos \varphi) = \frac{ \sin (n+1) \varphi }{ \sin \varphi}$, equation \eqref{appeq:SPEBaikH} is rewritten as:
	\begin{equation*}
				2 \gamma \sum_{n=1} ^{\infty} \sin (n \varphi) t^n = 2 \pi \sum_{n=1} ^{\infty} \left[ a_n \sin (n \varphi) - b_n \cos (n \varphi) \right] ,
	\end{equation*}
	and we immediately get:
	\begin{equation}
				a_n = \frac{ \gamma }{\pi} t^n , \qquad b_n = 0,
	\end{equation}
	for all $n=1,2, \dots $. The only yet undetermined coefficient $a_0$ is fixed by normalization to $a_0 = 1/(2 \pi)$. Putting all together:
	\begin{equation*}
				\rho_{H} (\varphi) = \frac{1}{2 \pi} \left[ 1 + 2 \gamma \sum_{n=1} ^{\infty} \cos (n \varphi) t^n \right] ,
	\end{equation*}
	which can be further simplified recognizing the generating function of Chebyshev polynomials of first kind $T_n (x)$, using the property $T_n (\cos (\varphi)) = \cos (n \varphi)$. We finally arrive to:
	\begin{equation}
			\label{eq:rhoBaikH1}
				\rho_{H} (\varphi) = \frac{1}{2 \pi} \left[ 1 + 2 \gamma t \left(  \frac{  \cos \varphi - t  }{1 + t^2 - 2 t \cos \varphi } \right) \right] .
	\end{equation}	
	The minima of this function are located at $\varphi = \pm \pi$, and imposing the condition $\rho (\varphi ) \ge 0$ for all $- \pi \le \varphi \le \pi$, we see that the present solution holds in the regime
	\begin{equation*}
		0 \le \gamma \le \frac{1+t}{2t} =: \gamma_{c,H} .
	\end{equation*}
	Above the critical value $\gamma_{c,H}$, solution \eqref{eq:rhoBaikH1} ceases to be valid and we must drop the assumption $\mathrm{supp} \rho = [- \pi, \pi ]$ and look for a different solution.

\subsection{Solution for $\ZUsymH$ at strong coupling}
		\label{app:ZUHstrong}
		In this Appendix we present the calculations to solve  the saddle point equation \eqref{SPEBaikH} at strong coupling. The procedure follows \cite[App. B]{MinwallaWadia:2013} (see also \cite[Sec. 4]{ML:2019}). We look for a one-cut solution with $\mathrm{supp} \rho = [- \phi_0, \phi_0]$, $0 < \phi_0 < \pi$.\par
		We introduce the following complex function, named resolvent:
		\begin{equation*}
				\Psi (z) = \int_{\mathcal{L}} \frac{ \dd u }{ \ii u} \frac{z+u}{z-u} \psi_H (u) , \qquad z \in \C  \setminus \mathcal{L}.
		\end{equation*}
		Here, $\mathcal{L} \subset \mathbb{S}^1$ is the arc of the unit circle from $e^{-\ii \phi_0} $ to $e^{\ii \phi_0} $, oriented anti-clockwise, and the complex function $\psi_H (u)$ is the continuation of $\rho_{H} (\varphi)$ to the complex plane, with $\psi_H (e^{\ii \varphi}) = \rho_{H} (\varphi)$. The eigenvalue density is recovered from the resolvent through the relation
		\begin{equation*}
			\Psi_{+} (e^{\ii \varphi}) - \Psi_{-} (e^{\ii \varphi}) = 4 \pi \psi_H (e^{\ii \varphi}) ,
		\end{equation*}
		with the subscript $+$ (resp. $-$) meaning the limit taken from outside (resp. inside) the unit circle.\par
		Following standard methods, we introduce two auxiliary complex functions:
		\begin{equation}
			 h(z) = \sqrt{ (e^{\ii \phi_0} -z) (e^{- \ii \phi_0} -z) } , \qquad z \in \C ,
		\end{equation}
		and a function $\Phi (z)$, to be determined, such that
		\begin{equation}
				\Psi (z) = h (z) \Phi (z)  , \qquad z \in \C .
		\end{equation}
		The saddle point equation \eqref{SPEBaikH} becomes a discontinuity equation for $\Phi (z)$. We will use this discontinuity equation, together with the fact that, by definition, $\Phi (z)$ decays as $\sim 1/z$ as $| z | \to \infty$, to obtain an integral expression for $\Phi (z)$ (see \cite{MinwallaWadia:2013,ML:2019}). Explicitly:
		\begin{equation*}
			\Phi (z) = \oint_{\mathcal{C}} \frac{ \dd u }{ 2 \pi  } \frac{ W(u) }{ (u-z) h(u) } ,
		\end{equation*}
		where we denoted for shortness $W(u)$ the left hand side of eq. \eqref{appeq:SPEBaikH}, and the contour $\mathcal{C}$ encloses the branch cut $\mathcal{L}$, the arc along the unit circle. See Figure \ref{fig:contourU}.
		\begin{figure}[htb]
			\centering
			\includegraphics[width=0.4\textwidth]{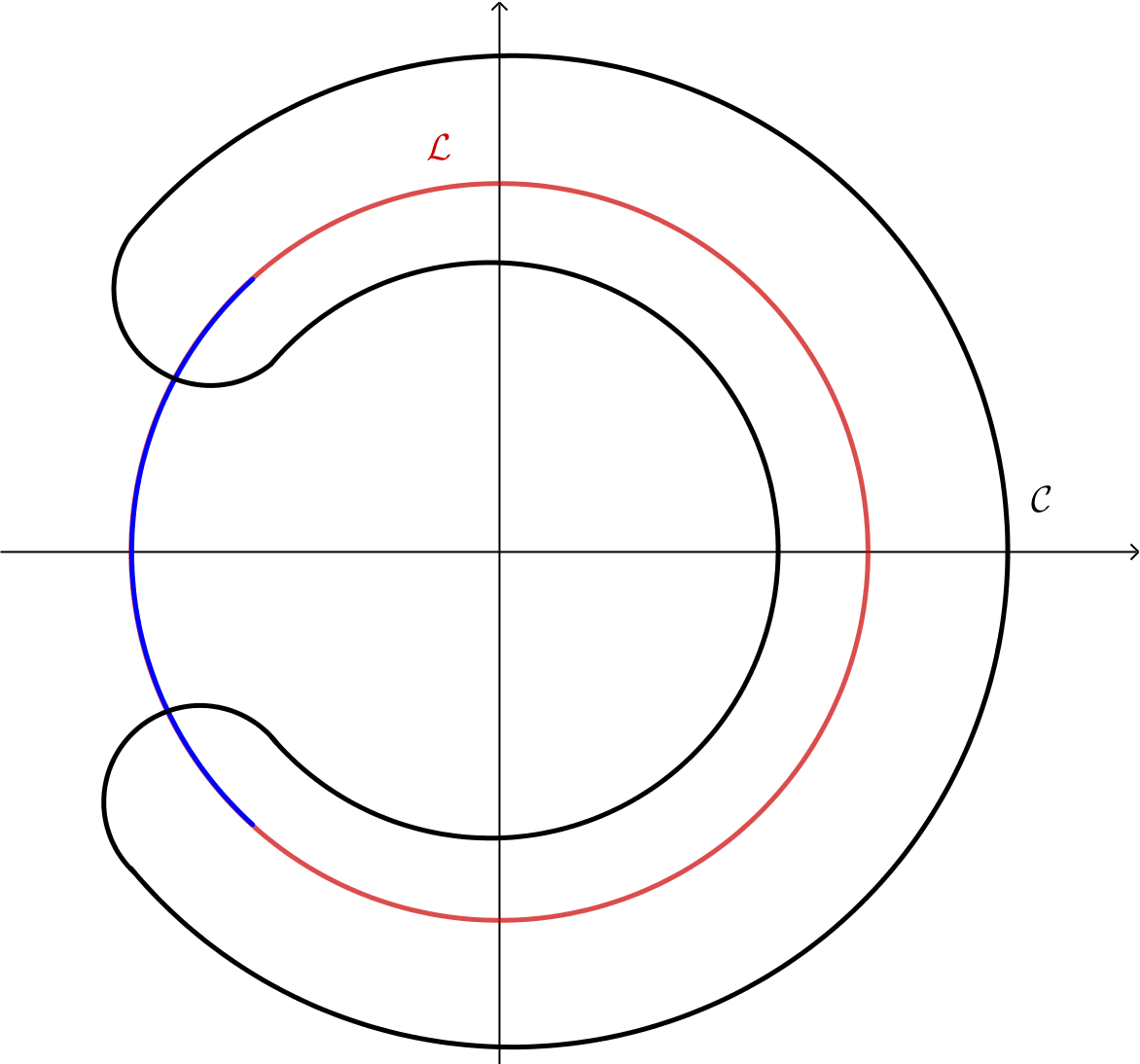}
				\caption{Integration contour in the complex plane. The red line is the cut $\mathcal{L}$, the black curve represents a choice of the contour $\mathcal{C}$. The blue arc on the unit circle, complementary to $\mathcal{L}$, is the gap, where the resolvent $\Psi (z)$ is continuous.}
		\label{fig:contourU}
		\end{figure}\par
		At this stage, we notice that the left hand side of eq. \eqref{appeq:SPEBaikH} has poles but no branch cuts, so we can manipulate the integration contour $\mathcal{C}$ in a convenient way. We arrive to:
		\begin{equation*}
				\Phi (z) = \mathcal{I}_1 (z) + \mathcal{I}_2 (z) + \mathcal{I}_3 (z) ,
		\end{equation*}
		where the three contributions are:
		\begin{align*}
				\mathcal{I}_1 (z) & = -\frac{ \gamma t }{h (z) } \left( \frac{z}{1-tz} - \frac{z^{-1}}{1-tz^{-1}} \right) , \\
				\mathcal{I}_2 (z) & = \sum_{  z_p  } \mathrm{Res}_{u=z_p} \frac{ - \gamma t }{h (u) (u-z) } \left( \frac{u}{1-tu} - \frac{u^{-1}}{1-tu^{-1}} \right) , \\
				\mathcal{I}_3 (z) & = \lim_{R \to \infty} \oint_{\mathcal{C}_R} \frac{ \dd u}{ 2 \pi \ii }  \frac{ \gamma t }{h (u) (u-z) } \left( \frac{u}{1-tu} - \frac{u^{-1}}{1-tu^{-1}} \right) ,
		\end{align*}
		The sum in $\mathcal{I}_2$ runs over the poles $\left\{ z_p \right\}$ of the derivative of the potential (that is, the poles of the left hand side of eq. \eqref{appeq:SPEBaikH}), in this case $z_p = t^{\pm 1}$, and $\mathcal{C}_R$ is a large circle of radius $R$. We have $\mathcal{I}_3 (z) =0$ and $\mathcal{I}_1 (z) h(z)$ yields no discontinuity, so it is irrelevant for the evaluation of $\psi (z)$. The unique relevant contribution comes from:
		\begin{equation}
			\begin{aligned}
				\mathcal{I}_2 (z) &= - \gamma t \left\{  \mathrm{Res}_{u=t} \frac{ \left( \frac{u}{1-tu} - \frac{u^{-1}}{1-tu^{-1}} \right)  }{ h(u) (u-z) } + \mathrm{Res}_{u=1/t} \frac{ \left( \frac{u}{1-tu} - \frac{u^{-1}}{1-tu^{-1}} \right)  }{ h(u) (u-z) } \right\} \\
					& = \gamma \left[ \frac{ t  }{ h(t) (t-z)} + \frac{ t^{-1} }{h (t^{-1} ) (t^{-1} - z)} \right] = \frac{ - \gamma t }{h (t) } \left( \frac{1}{1-tz} + \frac{ z^{-1} }{1-t z^{-1} } \right) ,
			\end{aligned}
		\end{equation}
		where for the last equality we used $h(t^{-1}) = - t^{-1} h(t)$. In general, on the real axis $h(x)>0$ if $x>1$ and $h (x)<0 $ if $x<1$. Therefore, since $0<t<1$, we bare in mind that $-h(t) >0$.\par
		Plugging the expression for $\Phi (z)$ into $\Psi (z)$ and taking its discontinuity along the arc $\mathcal{L}$, we arrive to:
		\begin{equation}
		\label{appeq:psiintermed}
			\psi_H ( e^{\ii \varphi } ) =  - \frac{ 2 \gamma t (1-t) }{\pi h(t)} \left( \frac{ \cos \frac{\varphi}{2} }{ (1-t)^2 +4t \left( \sin \frac{\varphi}{2} \right)^2 } \right)\sqrt{ \left( \sin \frac{\phi_0}{2} \right)^2 -  \left( \sin \frac{\varphi}{2} \right)^2  } .
		\end{equation}
		The boundary $\phi_0$ of the support is fixed by normalization: 
		\begin{equation*}
			1 = \int_{- \phi_0} ^{\phi_0} \dd \rho_{H}  (\varphi) = \gamma \left( \frac{ 1-t}{h(t)} +1 \right) ,
		\end{equation*}
		which provides
		\begin{equation*}
				h(t) = - \frac{ \gamma (1-t)}{(\gamma-1)}
		\end{equation*}
		and hence, writing $h(t)$ explicitly, we obtain:
		\begin{equation*}
				\left( \sin \frac{\phi_0}{2} \right)^2 = \frac{ (1-t)^2 (2 \gamma-1) }{ 4t (\gamma-1)^2 } .
		\end{equation*}
		Also, plugging the expression for $h (t)$ in \eqref{appeq:psiintermed}, we arrive to the final expression
		\begin{equation*}
				\rho_H (\varphi) = \frac{2 (\gamma-1) t }{\pi} \left( \frac{ \cos \frac{\varphi}{2} }{ (1-t)^2 +4t \left( \sin \frac{\varphi}{2} \right)^2 } \right)\sqrt{ \left( \sin \frac{\phi_0}{2} \right)^2 -  \left( \sin \frac{\varphi}{2} \right)^2  } .
		\end{equation*}

		\subsection{Solution for $\ZUsymE$ at weak coupling}
		\label{app:ZUEweak}
			In this Appendix we solve the saddle point equation \eqref{SPEBaikE}, which we rewrite here for completeness:
			\begin{equation}
				\label{appeq:SPEBaikE}
					- \ii \gamma t \left[ \frac{ e^{\ii \varphi} }{1+t e^{\ii \varphi} } - \frac{ e^{- \ii \varphi} }{1+t e^{- \ii \varphi} } \right] = \mathrm{P} \int \dd \vartheta \rho_E (\vartheta) \cot \left( \frac{ \varphi - \vartheta }{2} \right) .
			\end{equation}
			The analysis will follow closely Appendix \ref{app:ZUHweak} for small coupling and Appendix \ref{app:ZUHstrong} for strong coupling, and we omit most of the details.\par
			We again start with the ansatz $\mathrm{supp} \rho =[- \pi, \pi]$, and manipulate the right hand side of \eqref{appeq:SPEBaikE} exactly as we did in Appendix \ref{app:ZUHweak}, and for the left hand side we use:
			\begin{equation*}
				- \ii \gamma t \left[ \frac{ e^{\ii \varphi} }{1+t e^{\ii \varphi} } - \frac{ e^{- \ii \varphi} }{1+t e^{- \ii \varphi} } \right] =  - 2  \gamma t \sin \varphi \sum_{n=0} ^{\infty} U_n ( \cos \varphi ) (-t)^n = 2 \gamma \sum_{n=1} ^{\infty} \sin (n \varphi ) (-t)^n .
			\end{equation*}
			Then, eq. \eqref{appeq:SPEBaikE} becomes
			\begin{equation*}
				 2 \gamma \sum_{n=1} ^{\infty} \sin (n \varphi ) (-t)^n  = - 2 \pi \sum_{n=1} ^{\infty} \left[ a_n \sin (n \varphi) - b_n \cos (n \varphi) \right] ,
			\end{equation*}
			with $a_n, b_n$ the Fourier coefficients of $\rho_{E} (\varphi)$, and we immediately obtain:
			\begin{equation*}
				\rho_{E} (\varphi) = \frac{1}{2 \pi} \left[ 1 - 2 \gamma \sum_{n=1} ^{\infty} \cos (n \varphi) (-t)^n \right] ,
			\end{equation*}
			where, as usual, $a_0 = 1/( 2 \pi)$ is fixed by normalization. We recognize the generating function of Chebyshev polynomials of the first kind, using the standard identity $T_n (\cos (\varphi))= \cos (n \varphi)$, and arrive to the final expression for the eigenvalue density
			\begin{equation*}
				\rho_{E} (\varphi) = \frac{1}{2 \pi} \left[ 1 + 2 \gamma t \left( \frac{ \cos \varphi + t }{1 + t^2 +2 t \cos \varphi } \right) \right] .
			\end{equation*}
			The minima are placed at $\varphi = \pm \pi$, and the non-negativity condition $\rho_{E} \ge 0$ holds as long as:
			\begin{equation*}
				 \gamma \le \frac{ 1-t}{2t} =: \gamma_{c,E} .
			\end{equation*}
			We see that, as expected by na\"{i}ve comparison of the integral representation of the matrix models \eqref{eq:BaikH} and \eqref{eq:BaikE}, the eigenvalue densities $\rho_{H}$ and $\rho_{E}$, as well as the respective critical points, are related through $(\gamma, t) \leftrightarrow (- \gamma, -t) $.

		\subsection{Solution for $\ZUsymE$ at strong coupling}
		\label{app:ZUEstrong}
			The procedure at strong coupling is as in Appendix \ref{app:ZUHstrong}, and we avoid the technical details here. We assume a one-cut solution supported on $[- \phi_0, \phi_0]$, $0<\phi_0 < \pi$, and introduce the resolvent
			\begin{equation*}
				\Psi (z) = \int_{\mathcal{L}} \frac{ \dd u }{ \ii u} \frac{z+u}{z-u} \psi_E (u) , \qquad z \in \C \setminus \mathcal{L},
			\end{equation*}
			with integration contour $\mathcal{L}$ meant to be the arc along the unit circle connecting $e^{- \ii \phi_0} $ to $e^{\ii \phi_0}$, and the function $\psi_E (u)$ being the continuation of $\rho_E$ in $\C$, with $\psi (e^{\ii \varphi}) = \rho_{E} (\varphi)$. We introduce, as in Appendix \ref{app:ZUHstrong}, the complex functions $h(z)$ and $\Phi (z)$, and the saddle point equation \eqref{appeq:SPEBaikE} becomes a discontinuity equation for $\Phi (z)$. The only relevant contribution (for the computation of $\rho_E$) to $\Phi (z)$ is:
			\begin{equation*}
			\begin{aligned}
			\mathcal{I}_2 (z) &= - \gamma t \left\{  \mathrm{Res}_{u=-t} \frac{ \left( \frac{u}{1+tu} - \frac{u^{-1}}{1+tu^{-1}} \right)  }{ h(u) (u-z) } + \mathrm{Res}_{u=- 1/t} \frac{ \left( \frac{u}{1+tu} - \frac{u^{-1}}{1+tu^{-1}} \right)  }{ h(u) (u-z) } \right\} \\
				& = - \gamma \left[ \frac{ t  }{ h(-t) (t+z)} + \frac{ t^{-1} }{h (-t^{-1} ) (t^{-1} + z)} \right]  = \frac{ - \gamma t }{h (-t) } \left( \frac{1}{1+tz} + \frac{ z^{-1} }{1+t z^{-1} } \right) ,
			\end{aligned}
			\end{equation*}
			where for the last equality we used $h(-t^{-1}) = t^{-1} h(-t)$, with both understood to be negative. We get:
			\begin{equation}
			\label{eq:psiBaikEstrong}
				\rho_{E} (\varphi) = - \frac{ 2 \gamma t (1+t) }{\pi h(-t)} \left( \frac{ \cos \frac{\varphi}{2} }{ (1+t)^2 -4t \left( \sin \frac{\varphi}{2} \right)^2 } \right)\sqrt{ \left( \sin \frac{\phi_0}{2} \right)^2 -  \left( \sin \frac{\varphi}{2} \right)^2  } ,
			\end{equation}
			and $\sin (\phi_0 /2)$ is fixed by the normalization:
			\begin{equation*}
				1 = \int_{- \phi_0} ^{\phi_0} \rho_{E} (\varphi) \dd \varphi = - \gamma \left( \frac{ 1+t}{h(-t)} +1 \right) .
			\end{equation*}
			Thus,
			\begin{equation*}
				h(-t) = - \frac{ \gamma (1+t)}{\gamma+1} \quad \Longrightarrow \quad \left( \sin \frac{\phi_0}{2} \right)^2 = \frac{ (1+t)^2 (2 \gamma+1) }{ 4t (\gamma+1)^2 } .
			\end{equation*}
			Combining this latter expression with eq. \eqref{eq:psiBaikEstrong}, we finally arrive to:
			\begin{equation*}
				\rho_E (\varphi) = \frac{ 2 }{\pi}  (\gamma+1)t \left[ \frac{ \cos \left( \frac{\varphi}{2} \right) }{ (1+t)^2 -4t  \left( \sin \left( \frac{\varphi}{2} \right) \right)^2 } \right] \sqrt{ \left( \sin \frac{\phi_0}{2} \right)^2 -  \left( \sin \frac{\varphi}{2} \right)^2  } ,
			\end{equation*}
			supported in $[- \phi_0, \phi_0]$.

	\subsection{Solution for $\ZguH$ at weak coupling}
	\label{app:ZgenHweak}
			This Appendix is dedicated to the solution, in the weak coupling regime, to the saddle point equation \eqref{eq:SPEgenBaikH}, which we rewrite here for convenience:
			\begin{equation}
			\label{appeq:speZgH}
				- \ii \gamma t \left[ \frac{ (1-v) e^{\ii \varphi} }{1-t e^{\ii \varphi} } - \frac{ (1+v) e^{- \ii \varphi} }{1-t e^{- \ii \varphi} } \right] = \mathrm{P} \int \dd \vartheta \rho_{H} (\vartheta) \cot \left( \frac{ \varphi - \vartheta }{2} \right) .
			\end{equation}
			The procedure is similar to the one adopted in Appendix \ref{app:ZUHweak}, but now the parameter $v$ makes the left hand side complex-valued. Therefore, we expand the eigenvalue density in the exponential form:
			\begin{equation*}
				\rho_{H} (\vartheta) = \sum_{n \in \Z} \rho_{H,n} e^{\ii n \vartheta } ,
			\end{equation*}
			while we also use\footnote{The expression holds for $\varphi \ne \vartheta$, which is guaranteed by the principal value of the integral.}:
			\begin{equation*}
				\cot \left( \frac{ \varphi - \vartheta }{2} \right) = - \ii \sum_{n=1} ^{\infty} \left[ e^{\ii n (\varphi - \vartheta )} - e^{- \ii n (\varphi - \vartheta )}  \right] .
			\end{equation*}
			The right hand side of \eqref{appeq:speZgH} then becomes:
			\begin{equation*}
				\mathrm{P} \int \dd \vartheta \rho_H (\vartheta) \cot \left( \frac{ \varphi - \vartheta }{2} \right)  = - 2 \pi \ii \sum_{n=1} ^{\infty} \left[ \rho_{H,n} e^{\ii n \varphi } - \rho_{H,-n} e^{- \ii n \varphi } \right] .
			\end{equation*}
			For the left hand side we can expand the geometric series and obtain:
			\begin{equation*}
				- \ii \gamma t \left[ \frac{ (1-v) e^{\ii \varphi} }{1-t e^{\ii \varphi} } - \frac{ (1+v) e^{- \ii \varphi} }{1-t e^{- \ii \varphi} } \right]  = - \ii \gamma \sum_{n=1} ^{\infty} t^n \left[ (1-v) e^{\ii n \varphi } - (1+v) e^{- \ii n \varphi } \right] .
			\end{equation*}
			Putting all together, and taking into account the normalization condition that fixes $\rho_0$, we arrive to:
			\begin{equation*}
				\rho_H (\varphi) =  \frac{1}{2\pi} \left\{ 1 + 2 \gamma \sum_{n=1} ^{\infty} t^n \left[ \cos (n \varphi) - \ii v \sin (n \varphi) \right] \right\} .
			\end{equation*}
			We see that the parameter $v$, which controls the asymmetry, introduces an imaginary part in the eigenvalue density. Since $-1<v<1$, the relevant minima are those of the real part, thus the critical value is the same as in the symmetric case:
			\begin{equation*}
				\gamma_{c ,H} = \frac{1+t}{2t} ,
			\end{equation*}
			and above this value the solution ceases to be valid.

	\subsection{Solution for $\ZguH$ at strong coupling}
	\label{app:ZgenHstrong}
		
			When $\gamma > \gamma_{c,H}$ we have to look for a different solution to the saddle point equation \eqref{appeq:speZgH}. At strong coupling, the procedure is exactly the same as in Appendix \ref{app:ZUHstrong}, since we were already working in the complex plane, so the complexification of the left hand side of eq. \eqref{appeq:speZgH} for $v \ne 0$ does not alter the strategy. We assume a gapped one-cut solution supported on $[- \phi_0, \phi_0]$ and introduce the resolvent
			\begin{equation*}
				\Psi (z) = \int_{\mathcal{L}} \frac{ \dd u }{ \ii u} \frac{z+u}{z-u} \psi (u) , \qquad z \in \C  \setminus \mathcal{L} , \ \mathcal{L} \subset \mathbb{S}^1 .
			\end{equation*}
			Everything goes trough exactly as in Appendix \ref{app:ZUHstrong}, except for the factors of $(1 \pm v)$ appearing in the left hand side of eq. \eqref{appeq:speZgH}. One easily finds:
			\begin{equation*}
				\psi (e^{\ii \varphi} ) = - \gamma \frac{t}{ \pi h(t) } \left[ \frac{ (1-v) e^{\ii \varphi /2 } }{ 1-t e^{\ii \varphi } } + \frac{ (1+v) e^{- \ii \varphi /2 } }{ 1-t e^{- \ii \varphi } } \right] \sqrt{ \left( \sin \left( \frac{ \phi_0 }{2} \right) \right)^2 - \left( \sin \left( \frac{ \varphi }{2} \right) \right)^2 } .
			\end{equation*}
			The normalization fixes $\sin \left( \phi_0 /2 \right)^2$, through:
			\begin{equation*}
				1 = \int_{- \phi_0} ^{\phi_0} \dd \varphi \rho_H (\varphi ) =  \gamma \left( \frac{1-t}{h(t)} +1 \right) .
			\end{equation*}
			Notice that, due to the parity of the integral, the result is independent of $v$, in particular it is equal to the symmetric case ($v=0$). The final expression for the eigenvalue density:
			\begin{equation*}
				\rho_H (\varphi) = \frac{2 t  ( \gamma -1)  }{\pi (1-t) } \left[   \frac{  (1-t) \cos \left( \frac{ \varphi }{2} \right) - \ii v (1+t) \sin \left( \frac{ \varphi }{2} \right) }{  (1-t)^2 + 4 t \left( \sin \left( \frac{ \varphi }{2} \right) \right)^2 } \right] \sqrt{ \left( \sin \left( \frac{ \phi_0 }{2} \right) \right)^2 - \left( \sin \left( \frac{ \varphi }{2} \right) \right)^2 } ,
			\end{equation*}
			with $\sin (\phi_0 /2)$ the same as in \cite{Baik:2000} and Appendix \ref{app:ZUHstrong}, that is
			\begin{equation*}
				\left( \sin \frac{\phi_0}{2} \right)^2 = \frac{ (1-t)^2 (2 \gamma-1) }{ 4t (\gamma-1)^2 } .
			\end{equation*}

	\subsection{Solution for $\ZguE$ at weak coupling}
	\label{app:ZgenEweak}
			Here we solve the large $K$ limit of the second matrix model, eq. \eqref{eq:ZgenBaikE}, in the general case. The saddle point equation \eqref{eq:SPEgenBaikE} is:
			\begin{equation}
			\label{appeq:speZgE}
				- \ii \gamma t  \left( \frac{ (1-v) e^{\ii \varphi} }{1+t e^{\ii \varphi}} -  \frac{ (1+v) e^{ - \ii \varphi} }{1+t e^{- \ii \varphi}} \right) = \mathrm{P} \int \dd \vartheta \rho _{E} (\vartheta)  \cot \frac{ \varphi - \vartheta }{2}  .
			\end{equation}
			The procedure is as in Appendix \ref{app:ZgenHweak}. The right hand side reads:
			\begin{equation*}
				- 2 \pi \ii \sum_{n=1} ^{\infty} \left[ \rho_{E,n} e^{\ii n \varphi } - \rho_{E,-n} e^{- \ii n \varphi } \right] ,
			\end{equation*}
			while, expanding the geometric series, the left hand side becomes:
			\begin{equation*}
				\ii \gamma \sum_{n=1} ^{\infty} (-t)^n \left[ (1-v) e^{\ii n \varphi} - (1+v) e^{- \ii n \varphi} \right] . 
			\end{equation*}
			This gives:
			\begin{equation*}
				\rho_E (\varphi ) =  \frac{1}{2 \pi} \left[1 + 2 \gamma t \left( \frac{ \cos (\varphi ) +t  - \ii v \sin (\varphi ) }{ 1 + t^2 + 2 t \cos (\varphi)  } \right) \right] .
			\end{equation*}
			Again, the validity of this solution extends up to:
			\begin{equation*}
				\gamma_{c,E} = \frac{1-t}{2t} ,
			\end{equation*}
			as in the symmetric case.

	\subsection{Solution for $\ZguE$ at strong coupling}
	\label{app:ZgenEstrong}
			When $\gamma > \gamma_{c,E}$, we drop the assumption $\mathrm{supp} \rho_E = [-\pi, \pi ]$ and look for a one-cut solution with a gap. The procedure is clear from the previous Appendices, and the result is:
			\begin{equation*}
				\rho_E (\varphi ) = - \frac{ \gamma t }{\pi h(-t) } \left[ \frac{ (1-v) e^{\ii \varphi /2 } }{ 1+t e^{\ii \varphi} } + \frac{ (1+v) e^{- \ii \varphi /2 } }{ 1+t e^{- \ii \varphi} } \right] \sqrt{ \left( \sin \left( \frac{ \phi_0 }{2} \right) \right)^2  - \left( \sin \left( \frac{ \varphi}{2} \right) \right)^2} ,
			\end{equation*}
			which, after imposing normalization and rewriting in terms of trigonometric functions, gives:
			\begin{equation*}
				\rho_E (\varphi) = \frac{ 2 t (\gamma+1) }{\pi (1+t) }  \left[ \frac{ (1+t) \cos \left( \frac{\varphi}{2} \right) - \ii v (1-t) \sin  \left( \frac{\varphi}{2} \right) }{ (1+t)^2 -4t  \left( \sin\left( \frac{\varphi}{2} \right)\right)^2 } \right] \sqrt{ \left( \sin \left( \frac{\phi_0}{2} \right) \right)^2 -  \left( \sin \left( \frac{\varphi}{2} \right) \right)^2  } ,
			\end{equation*}
			supported on
			\begin{equation*}
				\mathrm{supp} \rho_E = [- \phi_0, \phi_0 ] , \quad \left( \sin \left( \frac{\phi_0}{2} \right) \right)^2 = \frac{ (1+t)^2 (2 \gamma +1) }{4t (\gamma +1)} .
			\end{equation*}

	\section{Free energies}
	\label{app:FEcalc}
		This Appendix contains the calculations to obtain the free energy of the unitary matrix models considered in the main text.
		
		\subsection{Free energy for $\ZUsymH$ and $\ZUsymE$ at weak coupling}
		\label{app:Fbothweak}
			The simplest way to obtain the free energy is to evaluate its derivative with respect to the parameter $t$ and then integrate. In the weak coupling phase, $0 \le \gamma \le \gamma_c$, we can use the boundary condition $\ZU(\gamma=0)=1$, which follows immediately from the normalization of the Haar measure on $U(N)$. At strong coupling we use the continuity of $\log \ZU (\gamma)$ at $\gamma= \gamma_c$.\par
			For what concerns $\mfsymH$, at weak coupling we have:
			\begin{equation*}
				\frac{ \dd \mfsymH}{ \dd t } (\gamma \le \gamma_{c,H} ) =  \gamma \int \dd \varphi \rho_H (\varphi) \left[ \frac{ e^{\ii \varphi} }{1-t e^{\ii \varphi}} + \frac{ e^{- \ii \varphi} }{1-t e^{- \ii \varphi}} \right] = \frac{ 2 \gamma^2 t }{1-t^2} .
			\end{equation*}
			Integrating with boundary condition $\ZU^{H} (\gamma=0)=1$ gives
			\begin{equation}
				\mfsymH (\gamma \le \gamma_{c,H} ) = - \gamma^2 \log (1-t^2) .
			\end{equation}\par
			For $\mfsymE$ at weak coupling we have:
			\begin{equation*}
				\frac{ \dd \mfsymE }{ \dd t } (\gamma \le \gamma_{c,E} )  =  \gamma \int \dd \varphi \rho_E (\varphi) \left[ \frac{ e^{\ii \varphi} }{1+t e^{\ii \varphi}} + \frac{ e^{- \ii \varphi} }{1+t e^{- \ii \varphi}} \right]  = \frac{ 2 \gamma^2 t }{1-t^2} ,
			\end{equation*}
			which, after integration, gives
			\begin{equation}
				\mfsymE (\gamma \le \gamma_{c,E} ) = - \gamma^2 \log (1-t^2) .
			\end{equation}
			In particular, we see that the free energies of the two models are equal in the weak coupling phase. This is expected, since, in the weak coupling phase, the free energy equals the result provided by Szeg\H{o} theorem, \emph{i.e.} the limit without scaling (see e.g. \cite{ML:2019} for a proof of this statement), and it is well known that the limits of the models \eqref{eq:BaikH} and \eqref{eq:BaikE} without scaling are equal.

		\subsection{Free energy for $\ZUsymH$ and $\ZUsymE$ at strong coupling}
		\label{app:Fbothstrong}
			We now pass to the strong coupling phase, and use the form of the eigenvalue densities $\rho_H$ and $\rho_E$ at $gamma > \gamma_c$.\par
			Starting with $\mfsymH$, we have:
			\begin{equation*}
			\begin{aligned}
				\frac{ \dd \mfsymH }{ \dd t } (\gamma > \gamma_{c,H} ) & =  \gamma \int \dd \varphi \rho_H (\varphi) \left[ \frac{ e^{\ii \varphi} }{1-t e^{\ii \varphi}} + \frac{ e^{- \ii \varphi} }{1-t e^{- \ii \varphi}} \right] \\
					& = \frac{ 4 \gamma (\gamma-1) t }{\pi} \int_{- \phi_0} ^{\phi_0} \dd \varphi \frac{ \cos \frac{\varphi}{2}  \sqrt{ \left( \sin \frac{ \phi_0}{2} \right)^2 - \left( \sin \frac{ \varphi}{2} \right)^2 } }{ (1-t)^2 +4t \left( \sin \frac{ \varphi}{2} \right)^2 } \left[ \frac{ \cos \varphi - t }{ (1-t)^2 +4t \left( \sin \frac{ \varphi}{2} \right)^2}  \right] \\
					& = \frac{ \gamma (\gamma -1) }{\pi t } \int_{-1} ^1 \dd y \frac{ \sqrt{1-y^2} \left( \frac{ 2t (\gamma-1)^2 }{(1-t)(2 \gamma-1)} -y^2 \right) }{ \left[ \frac{ (\gamma-1)^2 }{(2\gamma-1)} + y^2 \right]^2 } \\
					& = - \frac{ 1 + t - 4 \gamma t }{2t (1-t) } ,
			\end{aligned}
			\end{equation*}
			where we used the change of variables $y= \frac{ \sin (\varphi/2) }{\sin (\phi_0 /2)}$ and used the explicit form of $\sin (\phi_0 /2) ^2$. Immediate integration gives:
			\begin{equation*}
				\mfsymH (\gamma > \gamma_{c,H}) = - (2 \gamma -1) \log (1-t) - \frac{1}{2} \log (t) + \mathcal{C}_{H} (\gamma) ,
			\end{equation*}
			with $\mathcal{C}_{H} (\gamma)$ a $t$-independent integration constant fixed by continuity at $\gamma = \gamma_{c,H}$.\par
			For $\mfsymE$ the calculations are almost the same, except for the change of sign in front of all factors of $t$ and $\gamma$. We get:
			\begin{equation*}
			\begin{aligned}
				\frac{ \dd \mfsymE }{ \dd t } (\gamma > \gamma_{c,E} ) & =  \gamma \int \dd \varphi \rho_E (\varphi) \left[ \frac{ e^{\ii \varphi} }{1+t e^{\ii \varphi}} + \frac{ e^{- \ii \varphi} }{1+t e^{- \ii \varphi}} \right] \\
					& = \frac{ \gamma (\gamma +1) }{\pi t } \int_{-1} ^1 \dd y \frac{ \sqrt{1-y^2} \left( \frac{ 2t (\gamma+1)^2 }{(1+t)(2 \gamma+1)} -y^2 \right) }{ \left[ \frac{ (\gamma+1)^2 }{(2\gamma+1)} - y^2 \right]^2 } \\
					& = - \frac{ 1 - t - 4 \gamma t }{2t (1+t) } .
			\end{aligned}
			\end{equation*}
			Notice that the final line here could not be inferred from the final line of $\mfsymH$ with reversed signs. After integration we obtain the result:
			\begin{equation}
				\mfsymE (\gamma > \gamma_{c,E}) =  (2 \gamma +1) \log (1+t) - \frac{1}{2} \log (t) + \mathcal{C}_{E} (\gamma) ,
			\end{equation}
			where $\mathcal{C}_{E} (\gamma)$ is $t$-independent and fixed by continuity at the critical value $\gamma_{c,E}$.

		\subsection{Free energy for $\ZguH$ and $\ZguE$ }
		\label{app:FEgenboth}
			The free energies for the matrix models \eqref{eq:ZgenBaikH} and \eqref{eq:ZgenBaikE} are computed at weak coupling in exactly the same manner as in Appendix \ref{app:Fbothweak}. It is easy to check that, thanks to the parity of the integral, the free energies receive contribution from the terms proportional to $v$ but they remain real. Direct computations give:
			\begin{equation*}
				\frac{ \dd \mfgenuH }{ \dd t } (\gamma < \gamma_{c,H} ) = \frac{2  \gamma^2 t}{1-t^2} (1-v^2) = \frac{ \dd \mfgenuE }{ \dd t } (\gamma < \gamma_{c,E} ) .
			\end{equation*}\par
			At strong coupling we get:
			\begin{equation*}
			\begin{aligned}
				\frac{ \dd \mfgenuH }{ \dd t } (\gamma > \gamma_{c,H} ) & = \gamma \int_{- \phi_0} ^{\phi_0} \dd \varphi \rho_H (\varphi ) \left[ \frac{ (1-v) e^{\ii \varphi } }{ 1-te^{\ii \varphi } } + \frac{ (1+v) e^{- \ii \varphi } }{ 1-te^{- \ii \varphi } } \right] \\ 
				& = \frac{ 4 \gamma (\gamma-1) t }{\pi (1-t) } \int_{- \phi_0} ^{\phi_0} \dd \varphi \frac{ \left(  (1-t) \cos \frac{\varphi}{2} - \ii v (1+t) \sin \frac{\varphi}{2} \right) }{ \left[ (1-t)^2 +4t \left( \sin \frac{ \varphi}{2} \right)^2 \right]^2 }  \\
				& \times \sqrt{ \left( \sin \frac{ \phi_0}{2} \right)^2 - \left( \sin \frac{ \varphi}{2} \right)^2 }  \left( \cos \varphi - t - \ii v \sin \varphi \right) \\
				& = \frac{ \dd \mfsymH }{ \dd t } (\gamma > \gamma_{c,H} ) - v^2 \frac{ 16 \gamma (\gamma-1) t (1+t) }{\pi (1-t)} \int_{- x_0} ^{x_0} \dd x \frac{ x^2 \sqrt{ x_0 ^2 - x^2} }{ \left[ (1-t)^2 +4t x^2 \right]^2 } \\
				& = \frac{ \dd \mfsymH }{ \dd t } (\gamma > \gamma_{c,H} ) - \frac{v^2 (1+t) }{2t (1-t) } ,
			\end{aligned}
			\end{equation*}
			where in the second line we changed variables $x= \sin (\varphi /2)$, with boundary at $x_0 = \sin ( \phi_0 /2)$, and in the last line we used the explicit form of $x_0^2$ to simplify the resulting expression. Notice that the term proportional to $v^2$ does not depend explicitly on $\gamma$.\par
			Following the same steps, the free energy $\mfgenuE$ is computed at strong coupling as:
			\begin{equation*}
			\begin{aligned}
				\frac{ \dd \mfgenuE }{ \dd t } (\gamma > \gamma_{c,E} ) & =  \gamma \int_{- \phi_0} ^{\phi_0} \dd \varphi \rho_H (\varphi ) \left[ \frac{ (1-v) e^{\ii \varphi } }{ 1-te^{\ii \varphi } } + \frac{ (1+v) e^{- \ii \varphi } }{ 1-te^{- \ii \varphi } } \right] \\ 
				& = \frac{ \dd \mfsymE }{ \dd t } (\gamma > \gamma_{c,E} ) - v^2 \frac{ 16 \gamma (\gamma+1) t (1-t) }{\pi (1+t) } \int_{- x_0} ^{x_0} \dd x \frac{ x^2 \sqrt{ x_0 ^2 - x^2} }{ \left[ (1+t)^2 -4t x^2 \right]^2 } \\
				&=  \frac{ \dd \mfsymE }{ \dd t } (\gamma > \gamma_{c,E} ) - \frac{v^2 (1-t) }{2t (1+t) } .
			\end{aligned}
			\end{equation*}

		\section{Solution to the saddle point equations on the real line}
		Here we consider the solution to the saddle point equation obtained from the large $N$ limit of the matrix model after stereographic projection on the real line, as described in Section \ref{sec:largeKromanov}. We focus on $\mathcal{Z}_{E,\mathrm{stereo.}} ^{\mathrm{sym.}}$ and $\mathcal{Z}_{E,\mathrm{stereo.}} ^{\mathrm{gen.}}$, and the solutions for the $H$-models can be obtained in an analogous way.
		
		\subsection{Solution for  $\mathcal{Z}_{E,\mathrm{stereo.}} ^{\mathrm{sym.}}$}
		\label{app:ZEstereosym}
			We consider first the matrix model $\mathcal{Z}_{E,\mathrm{stereo.}} ^{\mathrm{sym.}}$ defined in eq. \eqref{eq:stereosymE}. In the limit in which the number of variables, $K$, is large, the leading contribution comes from the eigenvalue density $\rho_{\mathrm{stereo.}}$ that solves the saddle point equation \eqref{eq:stereoSPEsym}, which we report here for convenience:
			\begin{equation}
			\label{appeq:stereosymSPE}
				\mathrm{P} \int \dd y \frac{ \rho_{\mathrm{stereo.}} (y) }{x-y} = W(x) , \qquad x \in \R ,
			\end{equation}
			with
			\begin{equation*}
				W(x) = - \frac{ \gamma x }{ x^2 + t_0 ^2 } + \frac{( \gamma + 1) x }{x^2 + 1} , \qquad t_0:= \frac{1+t}{1-t} .
			\end{equation*}
			We proceed following standard methods for the analysis of Hermitian matrix models at large $K$, as reviewed for example in \cite{DiFrancesco:1993}. We introduce the resolvent
			\begin{equation*}
				\Psi (z) = \int_{\mathcal{L}} \dd y \frac{ \rho_{\mathrm{stereo.}} (y) }{ z-y} , \qquad z \in \C \setminus \mathcal{L} ,
			\end{equation*}
			where now, assuming a solution with symmetric support $[-A,A]$, the path $\mathcal{L}$ is the segment $[-A,A]$ on the real line. The eigenvalue density is recovered as the jump of the resolvent along $\mathcal{L}$:
			\begin{equation*}
				\Psi_{+} (z) - \Psi_{-} (z) = - 2 \pi \ii \rho_{\mathrm{stereo.}} (x), \qquad x \in [-A, A] ,
			\end{equation*}
			with $\pm$ meaning the limit taken approaching the real line from the upper (resp. lower) half plane.\par
			A solution for the resolvent is the following: write
			\begin{equation*}
				\Psi (z) = h(z) \Phi (z), \quad h(z) = \sqrt{ (-A-z) (A-z)} ,
			\end{equation*}
			and the saddle point equation \eqref{appeq:stereosymSPE} becomes a discontinuity equation for $\Phi(z)$, with solution:
			\begin{equation*}
				\Phi (z) = \oint _{\mathcal{C}} \frac{ \dd u }{ 2 \pi \ii } \frac{ W(u) }{ (z-u) h(u) } ,
			\end{equation*}
			where the contour $\mathcal{C}$ is a closed curve surrounding the cut $\mathcal{L}$, as in Figure \ref{fig:contourH}.
			\begin{figure}[hb]
				\centering
				\includegraphics[width=0.5\textwidth]{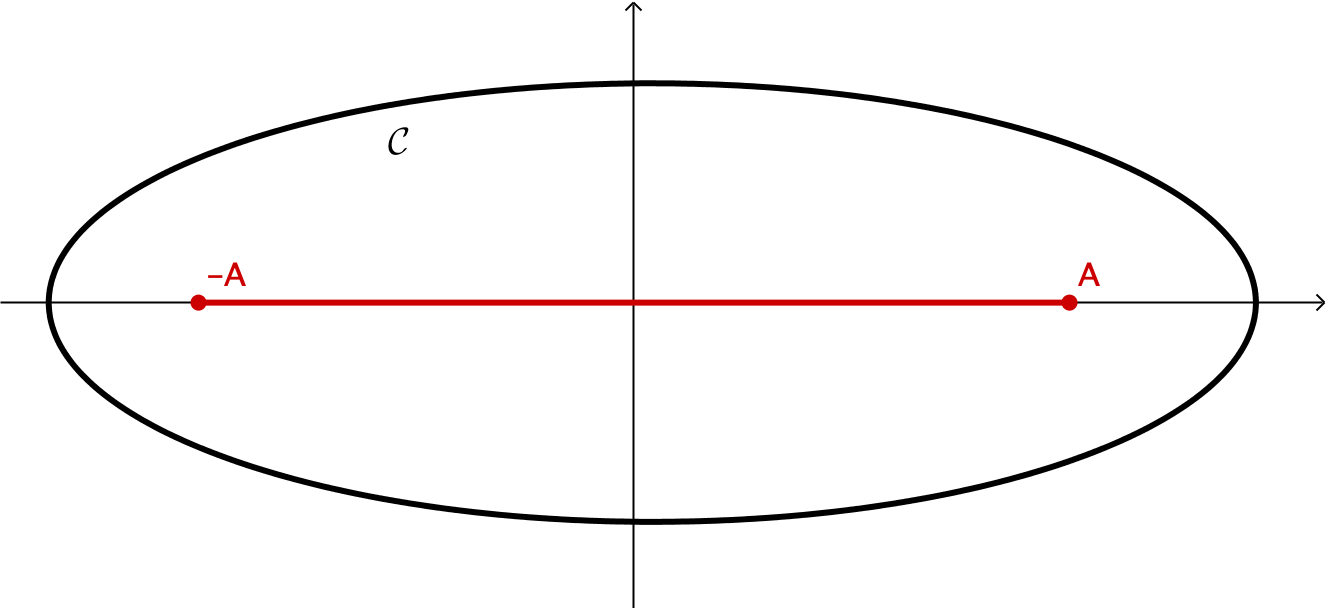}
				\caption{Integration contour in the complex plane. The red line is the cut $\mathcal{L}$, the black curve represents a choice of contour $\mathcal{C}$.}
			\label{fig:contourH}
			\end{figure}\par
			$W(u)$ is a meromorphic function, with simple poles at:
			\begin{equation*}
				u = z_p , \quad z_p \in \left\{ \pm \ii , \ \pm \ii t_0 \right\} .
			\end{equation*}
			We can thus deform the contour, to avoid the branch cut of the square root, and pick the poles of the integrand. This leaves a residual integral along an infinitely large circle. This latter contribution vanishes, since $W(u) \sim 1/u$ at large $|u|$. The pole at $u=z$ generates the regular part of $\Psi (z)$, thus yields no contribution to the eigenvalue density. Therefore, the relevant contributions to the $\rho$ arise from the poles of $W(u)$ in the complex plane. For the calculations, one has to be careful with the signs in front of the square roots, according to the definition of $h(z)$. With our conventions,
			\begin{equation*}
			\begin{aligned}
				h( \ii ) & = - \ii \sqrt{ A^2 + 1 } = - h (- \ii ) , \\
				h( \ii t_0 ) & = - \ii \sqrt{ A^2 + t_0 ^2  } = - h (- \ii t_0  ) .
			\end{aligned}
			\end{equation*}
			After simple computations, one arrives to:
			\begin{equation*}
				\sum_{z_p} \mathrm{Res}_{u=z_p} \frac{ W(u) }{ (z-u) \sqrt{ u^2 - A^2 } } = \frac{ \gamma t_0 }{ (z^2 + t_0 ^2 ) \sqrt{ A^2 + t_0 ^2 } } - \frac{ (\gamma+1) }{ (z^2 + 1 ) \sqrt{ A^2 + 1 } } .
			\end{equation*}
			The eigenvalue density is obtained this function multiplied by the jump of $h(z)$:
			\begin{equation*}
				\rho_{\mathrm{stereo.}} (x) = \frac{ \sqrt{A^2 - x^2 }}{\pi} \left[ \frac{ (1+ \gamma ) }{ \sqrt{ A^2 + 1 } (x^2 + 1) } - \frac{ \gamma t_0 }{ \sqrt{ A^2 + t_0 ^2 } (x^2 + t_0 ^2 ) }  \right] .
			\end{equation*}
			The value of the boundary $A$ can be fixed by normalization, or equivalently looking at the large $z$ behaviour of the resolvent $\Psi (z)$. From the definition, one has $\Psi (z) \sim 1/z $, as $z \to \infty$, while from the explicit evaluation, we have:
			\begin{equation*}
			\begin{aligned}
				\Psi (z) & = W(z) + \sqrt{ z^2 - A^2 } \left( \frac{ \gamma t_0 }{ (z^2 + t_0 ^2 ) \sqrt{ A^2 + t_0 ^2 } } - \frac{ (\gamma+1) }{ (z^2 + 1 ) \sqrt{ A^2 + 1 } }  \right) \\
					& = \frac{1}{z} \left( 1 + \frac{\gamma t_0 }{ \sqrt{ A^2 + t_0 ^2 } }   - \frac{ \gamma+1 }{\sqrt{A^2 +1}} \right) + \mathcal{O} (z^{-2}) .
			\end{aligned}
			\end{equation*}
			This provides the solution (recall that $t_0 = (1+t)/(1-t)$)
			\begin{equation*}
				A^2 = \frac{ (2 \gamma + 1 )(1+t)^2 }{ (2 \gamma + 1  - t )( 2 \gamma - 1 + t) } .
			\end{equation*}
			This is a positive quantity, and thus $\rho_{\mathrm{stereo.}}$ is supported on the real line, as long as
			\begin{equation*}
				\gamma > \frac{1-t}{2t} .
			\end{equation*}
			This is consistent with our general analysis: due to the change of topology, the stereographic projection should only provide the gapped phase of the unitary matrix model. We also notice that, undoing the stereographic projection, we obtain:
			\begin{equation*}
				e^{ \ii \phi_0 } = \frac{ 1+ \ii A }{ 1 - \ii A }  \quad \Longrightarrow \quad \left( \sin \frac{ \phi_0 }{2} \right)^2 = \frac{ (2 \gamma +1 ) (1+t)^2 }{ 4 t (\gamma +1 )^2 } ,
			\end{equation*}
			thus the boundary we obtain on the real line is in fact the stereographic projection of the boundary of the model on the unit circle, as expected.

		\subsection{Solution for  $\mathcal{Z}_{E,\mathrm{stereo.}} ^{\mathrm{gen.}}$}
		\label{app:ZEstereogen}
			In this Appendix, we study the large $K$ limit of the matrix model $\mathcal{Z}_{E,\mathrm{stereo.}} ^{\mathrm{gen.}}$ defined in eq. \eqref{eq:stereogenE}. The saddle point equation is:
			\begin{equation}
			\label{appeq:stereogenSPE}
				\mathrm{P} \int \dd y \frac{ \rho_{\mathrm{stereo.}} (y) }{x-y} = W(x) , \qquad x \in \R ,
			\end{equation}
			with $W(x)$ in the present, more general case given by:
			\begin{equation*}
				W(x) = - \frac{ \gamma x }{ x^2 + t_0 ^2 } + \frac{( \gamma + 1) x }{x^2 + 1} + 4 \ii v \gamma \frac{t}{1-t} \frac{ x^2 - t_0 }{ (x^2 +1)(x^2 + t_0 ^2)} .
			\end{equation*}
			From this expression, it is clear that the asymmetry in the integrand introduces an imaginary part in the eigenvalue density, but does not introduce new poles of $W(u)$ in the complex plane.\par
			We proceed as in the previous Appendix, and find:
			\begin{equation*}
				\Psi (z)  = W(z) + h(z) \sum_{z_p} \mathrm{Res}_{u=z_p} \frac{ W(u) }{ (z-u) \sqrt{ u^2 - A^2 } } ,
			\end{equation*}
			where now the residues at $u= \pm \ii, \pm \ii t_0$ yield an extra term, proportional to the parameter $v$. After some simple calculations, we see that this extra contribution is:
			\begin{equation*}
				2 \ii v \gamma \left( \frac{ z }{ (z^2 + t_0 ^2 ) \sqrt{ A^2 + t_0 ^2 } } -  \frac{ z }{ (z^2 + 1 ) \sqrt{ A^2 + 1 } } \right) .
			\end{equation*}
			The final expression for the eigenvalue density is:
			\begin{equation*}
				\rho_{\mathrm{stereo.}} (x) = \frac{ \sqrt{A^2 - x^2 }}{\pi} \left[ \frac{ (1+ \gamma ) + \ii v \gamma x  }{ \sqrt{ A^2 + 1 } (x^2 + 1) } - \frac{ \gamma t_0 + \ii v \gamma x }{ \sqrt{ A^2 + t_0 ^2 } (x^2 + t_0 ^2 ) }  \right] ,
			\end{equation*}
			Imposing normalization, from the parity of the integral we obtain the same value of $A$ as above.
			
		\subsection{Solution for  $\mathcal{Z}_{E,\mathrm{stereo.}} ^{\mathrm{sym.}}$ with modified symbol}
		\label{app:stereo+mono}
			
			We now solve the saddle point equation \eqref{appeq:stereosymSPE} with, on the right hand side, the modified function
			\begin{equation*}
				W(x) = - \frac{ \gamma x }{ x^2 + t_0 ^2 } + \frac{( \gamma + 1) x - 2 b }{x^2 + 1} .
			\end{equation*}
			The procedure follows closely that of Appendix \ref{app:ZEstereosym}, but now we do not assume symmetry of the support for $\rho_{\mathrm{stereo.}}$, and let
			\begin{equation*}
				\mathrm{supp} \rho_{\mathrm{stereo.}} = [-A, B] .
			\end{equation*}
			We therefore modify the definition of the auxiliary function
			\begin{equation*}
				h (z) = \sqrt{ (-A -z) (B-z) } .
			\end{equation*}
			For later convenience, we also introduce the functions 
			\begin{equation}
			\label{appeq:redh}
			\begin{aligned}
				\tilde{h} (x) & = \sqrt{ \lvert x^2 + AB + \ii x (B-A) \rvert } = \left( x^4 +A^2 B^2 + x^2 (A^2 + B^2)  \right)^{\frac{1}{4}} , \\
				\tilde{\theta} (x) &= \frac{1}{2} \arg \left(  x^2 + AB + \ii x (B-A) \right) =  \frac{1}{2} \arctan \frac{ x (B-A ) }{ x^2 + AB } ,
			\end{aligned}
			\end{equation}
			for $x \in \R$. Note also that $\tilde{h} (-x) = \tilde{h} (x)$ and $\tilde{\theta} (- x)= - \tilde{\theta} (x)$. Following Appendix \ref{app:ZEstereosym}, we introduce the resolvent $\Psi (z)$ and arrive to:
			\begin{equation*}
				\Psi (z)  = W(z) + h(z) \sum_{z_p} \mathrm{Res}_{u=z_p} \frac{ W(u) }{ (z-u) h(u) } ,
			\end{equation*}
			with poles at $z_p = \pm \ii, \pm \ii t_0$. Using
			\begin{equation*}
				h( \ii )  = - \ii \tilde{h} (1) e^{ \ii \tilde{\theta} (1)} , \quad   h (- \ii ) = \ii \tilde{h} (1)  e^{ - \ii \tilde{\theta} (1)} , 
			\end{equation*}
			and similarly for $\pm \ii t_0$, we arrive to:
			\begin{equation*}
			\begin{aligned}
				 \sum_{ z_p} \mathrm{Res}_{u=z_p} \frac{ W(u) }{ (z-u) h(u) } & = \frac{ \gamma \left[ t_0 \cos  \tilde{\theta} (t_0) - z \sin  \tilde{\theta} (t_0) \right] }{ \tilde{h} (t_0) (z^2 + t_0 ^2 ) } \\
					&  - \frac{ \left[ (\gamma +1) \cos \tilde{\theta} (1) + 2 b \sin \tilde{\theta} (1)  \right] + z \left[ 2b \cos \tilde{\theta} (1) - (\gamma +1) \sin \tilde{\theta} (1)  \right]  }{ \tilde{h} (1) (z^2 +1) } ,
			\end{aligned}
			\end{equation*}
			and the eigenvalue density is:
			\begin{equation*}
			\begin{aligned}
				\rho_{\mathrm{stereo.}} (x) = \frac{ \sqrt{(x-A)(B-x) }}{\pi} & \left[ \frac{ \gamma \left[ t_0 \cos \tilde{\theta} (t_0) - z \sin  \tilde{\theta} (t_0) \right] }{ \tilde{h} (t_0) (z^2 + t_0 ^2 ) } \right. \\
				& \left. - \frac{ \left[ (\gamma +1) \cos \tilde{\theta} (1) + 2 b \sin \tilde{\theta} (1)  \right] + z \left[ 2b \cos \tilde{\theta} (1) - (\gamma +1) \sin \tilde{\theta} (1)  \right]  }{ \tilde{h} (1) (z^2 +1) }  \right]
			\end{aligned}
			\end{equation*}
			for $x \in [-A,B]$. The boundaries of the support are fixed by imposing the condition $\Psi (z) \sim 1/z$ at large $z$. Using
			\begin{equation*}
				\cos 2 \tilde{\theta} (x) = \frac{ x^2 + AB }{ \tilde{h} (x) ^2 } ,
			\end{equation*}
			order $z^0$ in the expansion of $\Psi (z)$ gives the constraint
			\begin{equation*}
				\frac{\gamma }{2 \tilde{h} (t_0)} \left[ \frac{ t_0 ^2 + AB }{ \tilde{h} (t_0)^2 } -1 \right] -  \frac{1}{2 \tilde{h} (1)} \left[ \left( \frac{ 1 + AB }{ \tilde{h} (1)^2 } \right) \left( 2b - \gamma - 1 \right) +2b + \gamma +1  \right]  = 0 ,
			\end{equation*}
			while from order $z^1$, and using the previous expression to simplify the equation, we get the constraint
			\begin{equation*}
				\frac{\gamma t_0 }{2 \tilde{h} (t_0)} \left[ \frac{ t_0 ^2 + AB }{ \tilde{h} (t_0)^2 } +1 \right] -  \frac{1}{2 \tilde{h} (1)} \left[ \left( \frac{ 1 + AB }{ \tilde{h} (1)^2 } \right) \left( 2b + \gamma + 1 \right) - 2b + \gamma +1  \right]  = 0 .
			\end{equation*}

\end{appendix}

	\bibliographystyle{ourstyle}
	\bibliography{MeixBaik-biblio_revisado}

\end{document}